\documentclass[runningheads]{llncs}
\usepackage{amssymb} 
\usepackage{amsfonts} 
\usepackage{amsmath} 
\usepackage{latexsym}
\usepackage{enumerate}
\usepackage{graphicx} 
\usepackage{psfrag}
\usepackage{comment}
\usepackage{bold-extra}
\usepackage{url}
\usepackage{mathrsfs}


\newcommand{\Hlineny}{Hlin\v{e}n\'{y}}
\newcommand{\EF}{Ehrenfeucht-Fra\"{i}ss\'{e}}

\newcommand{\rank}[1]{\ensuremath{{\rm rank}(#1)}}
\newcommand{\lab}[2]{\ensuremath{{\it lab}\colon #1 \to #2}}
\newcommand{\Lab}{\ensuremath{{{\it lab}}}}
\newcommand{\GF}[1]{\ensuremath{{\rm GF}(#1)}}

\newcommand{\myotimes}[3]{\ensuremath{\mathrel{{\otimes}[#1|#2,#3]}}}
\newcommand{\moreotimes}[1]{\ensuremath{\otimes_{#1}}}
\newcommand{\vect}[1]{\ensuremath{\mathbf{#1}}}
\newcommand{\mso}{\ensuremath{{\rm MSO}}}
\newcommand{\bem}[1]{{{\em #1}}}
\newcommand{\mcg}{\ensuremath{\mathcal{MC}}}
\newcommand{\mycart}[3]{\ensuremath{\mathrel{{\times}(#1, #2, #3)}}}

\def\root{\ensuremath{{\rm root}}}
\def\children{\ensuremath{{\rm children}}}
\def\subtree{\ensuremath{{\rm subtree}}}
\def\subtrees{\ensuremath{{\rm subtrees}}}
\def\free{\ensuremath{{\rm free}}}
\def\ordind{\ensuremath{{\rm Ord}}}
\def\indicator{\ensuremath{{\rm ind}}}

\newcommand{\ms}[1]{\ensuremath{\mathscr{#1}}}
\newcommand{\qr}[1]{\ensuremath{{\rm qr}(#1)}}

\newcommand{\msoeq}[1]{\ensuremath{\equiv_{#1}^{{\rm MSO}}}}
\newcommand{\dom}[1]{\ensuremath{{\rm domain}(#1)}}
\newcommand{\range}[1]{\ensuremath{{\rm range}(#1)}}

\newcommand{\fullchar}[4]{\ensuremath{{\rm FC}_{#1}(#2,#3,#4)}}
\newcommand{\fc}[2]{\ensuremath{{\rm FC}_{#1}(#2)}}
\newcommand{\redchar}[4]{\ensuremath{{\rm RC}_{#1}(#2,#3,#4)}}
\newcommand{\rc}[2]{\ensuremath{{\rm RC}_{#1}(#2)}}

\newcommand{\eqv}[1]{\ensuremath{\equiv_{#1}}}

\newcommand{\msoone}{\ensuremath{{\rm MSO}_1}}

\newcommand{\opt}{\ensuremath{\rm opt}}

\newtheorem{prop}{Proposition}
\spnewtheorem*{main}{The Main Theorem}{\bfseries \upshape}{\itshape}

\begin{document}
\titlerunning{Linear-Time Algorithms for Graphs of Bounded Rankwidth}
\title{Linear-Time Algorithms for Graphs of Bounded Rankwidth:
A Fresh Look Using Game Theory 
\thanks{This work is supported by the Deutsche Forschungsgemeinschaft (DFG).}} 
\author{Alexander Langer \and Peter Rossmanith \and Somnath Sikdar}
\institute{RWTH Aachen University, 52074 Aachen, Germany.}
\date{}

\maketitle

\begin{abstract}
We present an alternative proof of a theorem by Courcelle, Makowski and Rotics~\cite{CMR00}
which states that problems expressible in \msoone\ are solvable in linear time 
for graphs of bounded rankwidth. Our proof uses a game-theoretic approach 
and has the advantage of being self-contained, intuitive, and fairly easy to follow.
In particular, our presentation does not assume any background in logic or automata theory.
We believe that it is good to have alternative proofs of this important result.
Moreover our approach can be generalized to prove other results of a similar flavor,
for example, that of Courcelle's Theorem for treewidth~\cite{Cou90,KLR10}.  
\end{abstract}

\section{Introduction}
In this paper we give an alternate proof of the theorem by Courcelle, Makowski and Rotics~\cite{CMR00}:
\emph{Every decision or optimization problem expressible in \msoone\ is linear time solvable on graphs of bounded
cliquewidth}. 
We prove the same theorem for graphs of bounded rankwidth. Since rankwidth 
and cliquewidth are equivalent width measures in the sense that a graph has bounded rankwidth
iff it has bounded cliquewidth, it does not matter which of these width measures is used 
to state the theorem~\cite{OS06b}.

The proof by Courcelle et al.~\cite{CMR00,CMR01} makes use of the Feferman-Vaught
Theorem~\cite{FV59} adapted to MSO (cf.~\cite{Gur79,Gur85}) and MSO transductions (cf., \cite{Cou94}).
Understanding this proof requires a reasonable background in logic and as such this proof
is out of reach of many practicing algorithmists. 
An alternative proof of this theorem has been recently published by Ganian and \Hlineny~\cite{GH10}
who use an automata-theoretic approach to prove the theorem. Our approach to 
proving this theorem is game-theoretic and what distinguishes our proof is that it is 
fairly simple, intuitive and, more importantly, self-contained. An outline of our approach 
follows.

It is known that any graph of rankwidth~$t$ can be represented
by a $t$-labeled parse tree~\cite{GH10}. 
Given any integer~$q$, one can define an equivalence relation on the class of all
$t$-labeled graphs as follows: $t$-labeled graphs~$G_1$ and~$G_2$ are 
equivalent, denoted $G_1 \msoeq{q} G_2$, iff
for every \msoone-formula of quantifier 
rank at most~$q$ $G_1 \models \varphi$ iff $G_2 \models \varphi$, i.e.,
no formula with at most~$q$ nested quantifiers can distinguish them. 
The number of equivalence classes depends on the quantifier rank~$q$ and the number of
labels~$t$ and each equivalence class can be represented by a tree-like
structure of size~$f(q,t)$, where~$f$ is a computable function of~$q$
and~$t$ only. 

This tree-like representative of an equivalence class, 
called a \emph{reduced characteristic tree of
depth~$q$} and denoted by~$\rc{q}{G}$, captures all model-checking games that can be played on graphs in that
equivalence class and formulas of quantifier rank at most~$q$. One
can construct a reduced characteristic tree of depth~$q$ given a
$t$-labeled parse tree of an $n$-vertex graph in time~$O(f'(q,t) \cdot n)$. Finally
to decide whether~$G \models \varphi$, for some
\msoone-formula~$\varphi$ of quantifier rank at most~$q$, we simply
simulate the model checking game on~$\varphi$ and~$G$ using~$\rc{q}{G}$.
This takes an additional 
$O(f(q,t))$ time and shows that one can decide whether~$G
\models \varphi$ in time~$O(f''(q,t) \cdot n)$ proving the
theorem. The notions of $q$-equivalence $\msoeq{q}$ and related two-player pebble games
(such as the \EF\ game) are fundamental to finite model theory and can be found in any book on the subject 
(cf.~\cite{EF99}). However for understanding this paper, one does
not need any prior knowledge of these concepts.

The rest of the paper is organized as follows. 
Section~\ref{sec:rankwidth_intro} 
recaps the basic definitions and properties of rankwidth. 
Section~\ref{sec:msol} is a brief introduction to monadic second order for
those who wish to see it, and has been included to make the paper self-contained.  
In Section~\ref{sec:q_equiv}, we introduce the equivalence relation~$\msoeq{q}$, 
model-checking games and characteristic trees of depth~$q$. In this section we prove
that reduced characteristic trees of depth~$q$ for $t$-labeled graphs 
indeed characterize the equivalence relation~$\msoeq{q}$ on the class 
of all $t$-labeled graphs, and that they have size at most~$f(q,t)$, for 
some computable function of~$q$ and~$t$ alone. 
In Section~\ref{sec:constructing_char_trees} 
we show how to construct reduced characteristic trees of depth~$q$ for an $n$-vertex
graph given its $t$-labeled parse tree decomposition in time~$O(f'(q,t) \cdot n)$. 
We then use all the
ingredients to prove the main theorem. We conclude in Section~\ref{sec:conclusion}
with a brief discussion of this approach and how it can be used to obtain other results.  

\section{Rankwidth: Definitions and Basic Properties}
\label{sec:rankwidth_intro}
Rankwidth is a complexity measure of decomposing a graph into a 
tree-structure known as a \emph{rank-decomposition} and was 
introduced by Oum and Seymour to study cliquewidth~\cite{OS06b}. 
Their main objective was to find an algorithm that, given as
input a graph~$G$ and an integer~$k$, decides
whether~$G$ has cliquewidth at most~$k$ in time~$O(f(k) \cdot |V(G)|^{O(1)})$.
This question is still open but Oum and Seymour showed that
rankwidth and cliquewidth are equivalent width measures 
in the sense that a graph
has bounded rankwidth if and only if it has bounded cliquewidth.
The relationship between rankwidth and cliquewidth can be expressed
by the following inequality:
\[\mbox{rankwidth} \leq \mbox{cliquewidth} \leq 2^{1 + \mbox{\scriptsize rankwidth}} -1.\]
Moreover they also showed that there does indeed exist an 
algorithm that decides whether a graph~$G$ has rankwidth at most~$k$ 
in time~$O(f(k) \cdot |V(G)|^3)$.

We shall briefly recap the basic definitions and properties of 
rankwidth. The presentation follows~\cite{GH10,Oum05}. 
To define rankwidth, it is advantageous to first consider
the notion of branchwidth since rankwidth is usually defined
in terms of branchwidth. 

\medskip

\noindent \emph{Branchwidth.} Let~$X$ be a finite set and let~$\lambda$ 
be an integer-valued function on the subsets of~$X$. We say that the
function~$\lambda$ is \emph{symmetric} if for all~$Y \subseteq X$ 
we have~$\lambda(Y) = \lambda(X \setminus Y)$. 
A \emph{branch-decomposition} of~$\lambda$ is a pair~$(T,\mu)$, where~$T$ is a
subcubic tree (a tree with degree at most three) 
and~$\mu\colon X \to \{\,t | \text{$t$ is a leaf of~$T$}\,\}$. For an 
edge~$e$ of~$T$, the connected components of~$T\setminus e$ partition
the set of leaves of~$T$ into disjoint sets~$X_1$ and~$X_2$. The \emph{width} of
the edge~$e$ of the branch-decomposition~$(T,\mu)$ is~$\lambda(\mu^{-1}(X_1))$.
The \emph{width} of~$(T,\mu)$ is the maximum width over all edges of~$T$.
The \emph{branchwidth} of~$\lambda$ is the minimum width of all 
branch-decompositions of~$\lambda$.

The branchwidth of a graph~$G$, for instance, is defined by letting~$X = E(G)$ 
and~$\lambda(Y)$ to be the number of vertices that are incident to an edge
in~$Y$ and in~$E(G)\setminus Y$ in the above definition.

\medskip

\noindent \emph{Rankwidth.} Given a graph~$G=(V,E)$ and a
bipartition~$(Y_1,Y_2)$ of the vertex set~$V$, define a 
binary matrix~$A[Y_1,Y_2]$ with rows indexed by the vertices 
in~$Y_1$ and columns indexed by the vertices in~$Y_2$ as follows: 
the $(u,v)$th entry of~$A[Y_1,Y_2]$ is~1 if and only if~$\{u,v\} \in E$. 
The \emph{cut-rank} function of~$G$ 
is then defined as the function~$\rho\colon 2^V \to \mathbf{Z}$
such that for all~$Y \subseteq V$
\[\rho(Y) = \rank{A[Y,V\setminus Y]}.\]  
A \emph{rank-decomposition} of~$G$ is a branch-decomposition of
the cut-rank function on~$V(G)$ and the \emph{rankwidth} of~$G$ is
the branch-width of the cut-rank function.

An important result concerning rankwidth is that there is an FPT-algorithm
that constructs a width-$k$ rank-decomposition of a graph~$G$, if there 
exists one, in time~$O(n^3)$ for a fixed value of~$k$.
\begin{theorem}[\cite{HO08}] Let~$k$ be a constant and~$n \geq 2$.
Given an $n$-vertex graph~$G$, one can either construct a rank-decomposition
of~$G$ of width at most~$k$ or confirm that the rankwidth of~$G$ is larger
than~$k$ in time~$O(n^3)$.
\end{theorem}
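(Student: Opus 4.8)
\smallskip
\noindent\emph{Sketch of a proof.} This is a result of \Hlineny\ and Oum; here I only indicate how I would prove it. The plan is a two-phase strategy, in the spirit of the treewidth algorithm of Bodlaender and Kloks: first compute quickly an \emph{approximately} optimal rank-decomposition, then \emph{refine} it by dynamic programming into an exactly optimal one (or detect that none of width $\le k$ exists). What makes both phases go through is that the cut-rank $\rho = \rho_G$ is not merely a symmetric submodular function but one \emph{represented over $\GF{2}$}: the value $\rho(Y) = \rank{A[Y,V\setminus Y]}$ is the $\GF{2}$-rank of a submatrix of the (fixed) adjacency matrix of $G$. This is exactly the structure that both the submodular-branchwidth approximation and the parse-tree/automaton machinery need.

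\smallskip
\noindent\emph{Phase 1 (approximation).} I would invoke the cubic-time approximation algorithm of Oum and Seymour~\cite{OS06b}: in time $O(n^3)$ it either returns a rank-decomposition of $G$ of width at most $3k-1$, or certifies that the rankwidth of $G$ exceeds $k$ --- in which case we are done. The decomposition it returns has width only linear in $k$; bringing the width down to \emph{exactly} $\le k$ (or failing) is the job of Phase 2.

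\smallskip
\noindent\emph{Phase 2 (exact refinement).} Given a rank-decomposition of width $w := 3k-1$, first translate $G$ into a ``parse tree'': a term of bounded width over a \emph{finite} signature, much as with the $t$-labeled parse trees used later in this paper. This takes time $O(n)$, because at each edge of the width-$w$ decomposition the ``interface'' to be recorded is a $\GF{2}$-subspace of dimension $\le w$, of which there are only finitely many. On this term run a finite bottom-up tree automaton whose states are the classes of an explicitly defined, finite-index congruence on width-$w$ \emph{boundaried} graphs: two pieces are congruent if they look the same to every way of completing them to a rank-decomposition of width $\le k$. The automaton accepts exactly those terms whose graph has rankwidth $\le k$; and from an accepting run (which tags every node of the term with its class) one reconstructs, by the usual backtracking through the dynamic-programming table, an explicit width-$\le k$ rank-decomposition of $G$. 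All of Phase 2 is linear in $n$ for fixed $k$, so the running time is dominated by Phase 1 and the whole algorithm is $O(n^3)$.

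\smallskip
\noindent\emph{Where the difficulty lies.} The crux is the finiteness in Phase 2 --- bounding the number of automaton states, i.e., the index of the congruence on width-$w$ boundaried graphs. One route is combinatorial: the behaviour of a width-$\le k$ partial decomposition ``at the boundary'' is a bounded amount of data, since everything near a cut of rank $\le w$ lives in a $\GF{2}$-space of dimension $\le w$ and the requirement ``width $\le k$'' adds only boundedly many further constraints; one then checks that this data composes along the operations of the parse term. A second route, closer to the main body of this paper, is to note that ``branch-width $\le k$'' is $\mso$-expressible for the binary matroid naturally associated to $G$ (via a known linear-time reduction from graph rankwidth to matroid branch-width), and to invoke a Feferman-Vaught-style composition theorem for $\mso$ over bounded-width parse terms. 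Either way, two further points require care: extracting an \emph{explicit} optimal decomposition (not merely a yes/no answer) from the automaton's run within the same linear time, and the bookkeeping needed to keep Phase 1 --- hence the whole algorithm --- within $O(n^3)$.
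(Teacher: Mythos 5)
This theorem is not proved in the paper at all: it is imported verbatim from \Hlineny\ and Oum~\cite{HO08}, so there is no ``paper proof'' to compare against, and your sketch has to be judged as a stand-alone argument for a known deep result. On that score, your two-phase outline (approximate, then refine over a bounded-width parse structure using the \GF{2}-representability of the cut-rank function, via matroid branch-width) does correctly identify the strategy actually used in the literature. But the sketch leaves unproved exactly the two points that constitute the content of~\cite{HO08} and its prerequisites, so it cannot be accepted as a proof. First, the finiteness of your congruence on width-$w$ boundaried pieces, equivalently the \mso-definability (or recognizability) of ``branch-width $\le k$'' for matroids represented over a finite field together with its compositionality along parse terms, is a substantial theorem of \Hlineny's parse-tree machinery, not a routine check. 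Second, and more seriously, the step ``from an accepting run one reconstructs, by the usual backtracking through the dynamic-programming table, an explicit width-$\le k$ rank-decomposition'' does not go through as stated: a state of such an automaton records only \emph{whether} a boundaried piece extends to a width-$\le k$ decomposition, not \emph{how} the piece is internally decomposed, so backtracking over states yields a yes/no certificate but not a decomposition. Turning the decision procedure into a construction is precisely the novelty of~\cite{HO08} (hence the title ``Finding \ldots''), and it requires additional machinery (partitioned matroids and repeated invocations of the decision algorithm on modified instances), which your sketch does not supply.

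A smaller but genuine inaccuracy: Phase 1 as you state it is not available from~\cite{OS06b}. The Oum--Seymour approximation in that paper runs in polynomial but far from cubic time (roughly $O(n^9\log n)$ for fixed $k$); the $O(n^3)$ approximation with ratio about $3$ is later work of Oum, and in~\cite{HO08} the cubic bound is obtained by an iterative (compression-style) use of the exact bounded-width routine rather than by a black-box cubic approximation. As written, your time analysis therefore does not yield $O(n^3)$ even granting Phase 2. In short: right roadmap, but the hard steps --- finiteness/definability, constructing (not merely deciding) an optimal decomposition, and the cubic-time bookkeeping --- are exactly the gaps, and they are the theorem.
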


\subsection{Rankwidth and Parse Tree Decompositions}
\label{subsec:rankwidth_parse_trees}
The definition of rankwidth in terms of branchwidth is the one 
that was originally proposed by Oum and Seymour in~\cite{os06}. 
It is simple and it allows one to prove several properties of rankwidth 
including the fact that rankwidth and cliquewidth are, in fact, equivalent width
measures in the sense that a graph has bounded rankwidth if and only if it has
bounded cliquewidth. However this definition is not very useful from
an algorithmic point-of-view and this prompted Courcelle and 
Kant\'{e}~\cite{ck07} to introduce the notion of bilinear products of multi-colored
graphs and algebraic expressions over these products as an equivalent
description of rankwidth. Ganian and \Hlineny~\cite{GH10} formulated the same ideas
in terms of labeling joins and parse trees which we briefly describe here.

\medskip

\noindent \emph{$t$-labeled graphs.} A \emph{$t$-labeling} of a graph~$G$
is a mapping~$\lab{V(G)}{2^{[t]}}$ which assigns to each vertex of~$G$ a subset 
of~$[t] = \{1, \ldots, t\}$. A \emph{$t$-labeled graph} is a pair~$(G,\Lab)$, 
where~$\Lab$ is a labeling of~$G$ and is denoted by~$\bar{G}$. 
Since a $t$-labeling function may assign
the empty label to each vertex, an unlabeled graph is considered to be 
a $t$-labeled graph for all~$t \ge 1$. A $t$-labeling of~$G$ may also be 
interpreted as a mapping from~$V(G)$ to the $t$-dimensional binary vector   
space~$\GF{2^t}$ by associating the subset~$X \subseteq [t]$ with the $t$-bit
vector~$\vect{x} = x_1 \ldots x_t$, where~$x_i = 1$ if and only if~$i \in X$.
Thus one can represent a $t$-labeling~$\Lab$ of an $n$-vertex graph as 
an $n \times t$ binary matrix.
This interpretation will prove useful later on when $t$-joins are discussed.

A \emph{$t$-relabeling} is a mapping~$f\colon [t] \to 2^{[t]}$.
One can
also view a $t$-relabeling as a linear transformation from the space~$\GF{2^t}$ 
to itself and one can therefore represent a $t$-relabeling  
by a $t \times t$ binary matrix~$T_f$. For a $t$-labeled graph~$\bar{G} = (G, \Lab)$, we
define~$f(\bar{G})$ to be the $t$-labeled graph~$(G, f \circ \Lab)$, 
where~$(f \circ \Lab)(v)$ is the vector in~$\GF{2^t}$ obtained by 
applying the linear transformation~$f$ to the vector~$\Lab(v)$. It is easy
to see that the labeling~$\Lab' = f \circ \Lab$ is the matrix product~$\Lab \times T_f$.
Informally, to calculate~$(f \circ \Lab)(v)$, apply the map~$f$ to each element
of~$\Lab(v)$ and ``sum the elements modulo~2''.

We now define three operators on $t$-labeled graphs that will be used
to define parse tree decompositions of $t$-labeled graphs. These operators were
first described by Ganian and \Hlineny\ in~\cite{GH10}. The first operator 
is denoted~$\odot$ and represents a nullary operator that creates a new graph vertex
with the label~1. The second operator is the $t$-labeled join and is
defined as follows. Let~$\bar{G}_1=(G_1, \Lab_1)$ and~$\bar{G}_2=(G_2,\Lab_2)$
be $t$-labeled graphs. The \emph{$t$-labeled join} of~$\bar{G}_1$ and~$\bar{G}_2$,
denoted~$\bar{G}_1 \otimes \bar{G}_2$, is defined as taking the disjoint union 
of~$G_1$ and~$G_2$ and adding all edges between vertices~$u \in V(G_1)$ 
and~$v \in V(G_2)$ such that~$|\Lab_1(u) \cap \Lab_2(v)|$ is odd. The resulting
graph is unlabeled. 

Note that~$|\Lab_1(u) \cap \Lab_2(v)|$ is odd if and 
only if the scalar product~$\Lab_1(u) \bullet \Lab_2(v) =1$, that is, 
the vectors~$\Lab_1(u)$ and~$\Lab_2(v)$ are \emph{not orthogonal} in the
space~$\GF{2^t}$. 
For~$X \subseteq V(G_1)$, the set of vectors~$\gamma(\bar{G}_1,X) = \{\,\Lab_1(u) \mid u \in X\,\}$
generates a subspace~$\langle \gamma(\bar{G}_1,X) \rangle$ of~$\GF{2^t}$.
The following result shows which pair of vertex subsets do not generate 
edges in a $t$-labeled join operation.
\begin{prop}[\cite{GHO09}]
Let~$X \subseteq V(G_1)$ and~$Y \subseteq V(G_2)$ be arbitrary non\-empty 
subsets of $t$-labeled graphs~$\bar{G}_1$ and~$\bar{G}_2$. In the join
graph~$\bar{G}_1 \otimes \bar{G}_2$ there is no edge between any vertex
of~$X$ and a vertex of~$Y$ if and only if the 
subspaces~$\langle \gamma(\bar{G}_1,X) \rangle$ and~$\langle \gamma(\bar{G}_2,Y) \rangle$
are orthogonal in the vector space~$\GF{2^t}$.
\end{prop}

The third operator is called the $t$-labeled composition and is
defined using the $t$-labeled join and $t$-relabelings. Given
three $t$-relabelings $g,f_1,f_2\colon [t] \to 2^{[t]}$, 
the \emph{$t$-labeled composition} $\myotimes{g}{f_1}{f_2}$ is 
defined on a pair of $t$-labeled graphs~$\bar{G}_1 =(G_1, \Lab_1)$ 
and~$\bar{G}_2 =(G_2,\Lab_2)$ as follows: 
\[\bar{G}_1 \myotimes{g}{f_1}{f_2} \bar{G}_2 := \bar{H} = (\bar{G}_1 \otimes g(\bar{G}_2),\Lab),\]
where~$\Lab(v) = f_i \circ \Lab_i(v)$ for~$v \in V(G_i)$ and~$i \in \{1,2\}$.
Thus the $t$-labeled composition first performs a $t$-labeling join 
of~$\bar{G}_1$ and~$g(\bar{G}_2)$ and then relabels the vertices
of~$G_1$ using~$f_1$ and the vertices of~$G_2$ with~$f_2$. Note that a
$t$-labeling composition is not commutative and that~$\{u,v\}$ is
an edge of~$\bar{H}$ if and only if~$\Lab_1(u) \bullet (\Lab_2(v) \times T_g) =1$, 
where~$T_g$ is the matrix representing the linear transformation~$g$.

\begin{definition}[$t$-labeled Parse Trees]
\rm
A \bem{$t$-labeled parse tree}~$T$ is a finite,
ordered, rooted subcubic tree (with the root of degree at most two) such that
\begin{enumerate}
\item all leaves of~$T$ are labeled with the~$\odot$ symbol, and
\item all internal nodes of~$T$ are labeled with a $t$-labeled composition symbol.
\end{enumerate}
A parse tree~$T$ \bem{generates} the graph~$G$ that is obtained by the successive 
leaves-to-root application of the operators that label the nodes of~$T$.
\end{definition}

The next result shows that rankwidth can be defined using $t$-labeled parse trees.

\begin{theorem}[The Rankwidth Parsing Theorem {\cite{ck07,GH10}}]
A graph~$G$ has rankwidth at most~$t$ if and only if some labeling of~$G$ can be
generated by a $t$-labeled parse tree. Moreover, a width-$k$ rank-decomposition
of an $n$-vertex graph can be transformed into a $t$-labeled parse tree on $\Theta(n)$
nodes in time~$O(t^2 \cdot n^2)$.
\end{theorem}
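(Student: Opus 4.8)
The plan is to establish the equivalence and the algorithmic refinement together, since the ``if'' direction is exactly the construction that the time bound analyses. Start from a rank-decomposition $(T,\mu)$ of $G$ of width at most $t$: a subcubic tree whose leaves are the vertices of $G$, each edge of which induces a cut of cut-rank at most $t$. First I would subdivide an arbitrary edge of $T$ with a new node $r$ and root $T$ at $r$, obtaining a rooted binary tree with the same $\Theta(n)$ leaves and nodes. For a node $x$ write $V_x$ for the set of leaves below $x$; the edge joining $x$ to its parent induces the cut $(V_x,V\setminus V_x)$, so $\rank{A[V_x,V\setminus V_x]}=r_x\le t$. Fix a basis of the row space of that matrix and assign to each $u\in V_x$ the length-$t$ coordinate vector expressing the row of $u$ in this basis; call the resulting $t$-labeling of $V_x$ its \emph{canonical labeling} $\lambda_x$. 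The parse tree I build has the same shape as the rooted tree: each leaf becomes an $\odot$-node, and each internal node $p$ with children $x_L,x_R$ becomes a composition node whose operator $\myotimes{g}{f_1}{f_2}$ is chosen so that $p$ generates exactly $G[V_p]$ equipped with $\lambda_p$.

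Two linearity observations pin down the operators. The edges of $G[V_p]$ not already present in $G[V_{x_L}]$ or $G[V_{x_R}]$ are precisely the $V_{x_L}$--$V_{x_R}$ edges; the neighbourhood of $u\in V_{x_L}$ inside $V_{x_R}$ is the restriction of $u$'s row of $A[V_{x_L},V\setminus V_{x_L}]$ to the $V_{x_R}$-columns, hence a linear image of $\lambda_{x_L}(u)$, and symmetrically from the other side. Since the canonical labels of one side span the relevant coordinate subspace, prescribing on pairs of canonical labels the adjacency of the corresponding vertices (which is well-defined, as equal labels force equal neighbourhoods in the opposite part) yields a bilinear form of rank at most $t$, realized by a matrix $T_g$; the associated relabeling $g$ is used in the join, and by the edge condition for joins the operation $\bar G_{x_L}\otimes g(\bar G_{x_R})$ produces $G[V_p]$. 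Next, restricting a row of $A[V_{x_L},V\setminus V_{x_L}]$ to the columns $V\setminus V_p$ and re-expressing it in the basis chosen for $V_p$ is a linear map carrying $\lambda_{x_L}(u)$ to $\lambda_p(u)$; extended arbitrarily to $\GF{2^t}$ this is $f_1$, and symmetrically $f_2$ handles $V_{x_R}$. Leaves, which $\odot$ creates with the label $\{1\}$, are absorbed by choosing the appropriate entries of the parent's $f_i$ and $g$, and a routine leaves-to-root induction then shows that $p$ generates $(G[V_p],\lambda_p)$; at the root this is $G$ with some labeling, proving the ``if'' direction. The converse is the easy one: given a $t$-labeled parse tree generating $(G,\Lab)$, view its underlying unrooted tree as a branch-decomposition with leaf set $V(G)$; an edge splits the leaves into $V_1,V_2$, the subtree on the $V_1$-side generates $G[V_1]$ with some labeling in $\GF{2^t}$, and every composition on the path from that subtree to the root inserts $V_1$--(other side) edges via a bilinear form evaluated on the current labels, which are fixed linear images of the $V_1$-side labels because relabelings compose linearly; hence each row of $A[V_1,V_2]$ is one fixed linear image of the corresponding $V_1$-side label vector, so $\rho(V_1)\le t$ and $G$ has rankwidth at most $t$.

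For the running time I would run the first construction as a bottom-up algorithm, keeping at each node the explicit canonical label vectors of its leaves and a reduced basis of the relevant row space. Rooting the tree costs $O(n)$. At an internal node $p$: the new basis comes from the at most $2t$ restricted basis vectors inherited from $x_L$ and $x_R$ by Gaussian elimination over $\GF{2}$ on vectors of length at most $n$, in $O(t^2n)$ time; the relabelings $f_1,f_2$ drop out as $t\times t$ change-of-basis matrices in $O(t^2n)$ time; $g$ is recovered in $O(t^2n)$ time by selecting at most $t$ vertices per side whose canonical labels form a basis and reading off the at most $t^2$ relevant adjacencies of $G$ --- crucially without ever inspecting all $|V_{x_L}|\cdot|V_{x_R}|$ entries of the cut; and the $|V_p|$ canonical labels are updated in $O(t^2|V_p|)$ time. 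Summing over the $\Theta(n)$ nodes and using $\sum_p|V_p|=O(n^2)$ for a rooted binary tree on $n$ leaves yields the claimed $O(t^2n^2)$ bound.

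The one genuinely delicate step is the existence of the join relabeling $g$: a canonical label records how a vertex attaches to the \emph{entire} rest of the graph, whereas the join at $p$ must reproduce only the edges to the opposite subtree, so one has to check that this projection of the adjacency pattern is still a bilinear form of rank at most $t$ \emph{and} is expressible through the labels already assigned --- exactly the place where one uses that the canonical labels of a side span the relevant coordinate subspace. Everything else --- the leaves-to-root induction, the bookkeeping with the orthogonality proposition, and the amortized $O(n^2)$ bound on $\sum_p|V_p|$ --- is routine once this point is settled.
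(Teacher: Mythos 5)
The paper never proves this theorem---it is imported verbatim from Courcelle--Kant\'e and Ganian--\Hlineny---so there is no internal proof to compare against; judged on its own, your argument is essentially the standard construction from those cited sources, and it is correct. All the load-bearing points are in place: the canonical labeling of $V_x$ by coordinates of the rows of $A[V_x,V\setminus V_x]$ in a chosen basis; the fact that the $V_{x_L}$--$V_{x_R}$ adjacency factors through these labels as a bilinear form, which is exactly what the $g$ inside $\myotimes{g}{f_1}{f_2}$ realizes, since the join tests $\Lab_1(u)\bullet(\Lab_2(v)\times T_g)$; the linearity of ``restrict the cut row to $V\setminus V_p$ and re-express in the parent's basis,'' which yields $f_1,f_2$ (the image does land in the parent's row space, since restricted child rows are parent rows); and, for the converse, that the label of $u\in V_1$ at any ancestor composition is a fixed linear image of its label at the subtree generating $V_1$, so every row of $A[V_1,V_2]$ lies in the image of one linear map $\GF{2^t}\to \GF{2}^{V_2}$, giving $\rho(V_1)\le t$. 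The complexity accounting (Gaussian elimination on at most $2t$ inherited basis vectors of length at most $n$ per node, recovering $g$ from at most $t^2$ adjacency queries on label-basis representatives, and $\sum_p |V_p| = O(n^2)$) does deliver the stated $O(t^2\cdot n^2)$ with $\Theta(n)$ parse-tree nodes.

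Three small points would need a sentence each in a full write-up. First, your bilinearity claim should be justified by additivity, not mere well-definedness: a well-defined function on pairs of labels need not be bilinear, but here adjacency to the opposite side is a linear function of the label in each argument (each cut row is a linear image of the label), which is exactly the spanning argument you gesture at. Second, the leaf mismatch you flag ($\odot$ forces label $\{1\}$ even for a vertex whose canonical label would be $0$) is indeed harmless, because a one-vertex side trivially satisfies the factorization property and the parent's $f_i$ can send $e_1$ wherever needed. Third, one should assume, after contracting degree-two nodes, that the rank-decomposition tree is cubic, so that every internal node of the rooted tree has exactly two children and the resulting parse tree is well-formed with $\Theta(n)$ nodes.
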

  
We now proceed to show the following. 
\def\themainthm{
\begin{main}[\cite{CMR00,GH10}]
Let~$\varphi$ be an \msoone-formula with~$\qr{\varphi} \leq q$. 
There is an algorithm that takes as input a $t$-labeled parse tree decomposition~$T$ of 
a graph~$G$ and decides whether~$G \models \varphi$ in time $O(f(q,t) \cdot |T|)$,
where~$f$ is some computable function and~$|T|$ is the number of nodes in~$T$. 
\end{main}
}
\themainthm
Here is how the sequel is organized. In Section~\ref{sec:msol} we briefly 
introduce monadic second order logic.
In Section~\ref{sec:q_equiv} we introduce 
a construct that plays a key role in our proof of the Main Theorem. 
This construct, called a characteristic tree of depth~$q$, 
is important for three reasons.
Firstly, a characteristic tree of depth~$q$ for a graph~$G$ allows
one to test whether an \mso\ formula~$\varphi$ of quantifier rank at 
most~$q$ holds in~$G$. Secondly, a characteristic tree has
small size and, thirdly, it can be efficiently constructed for
graphs of bounded rankwidth. 
The construction of characteristic trees is described in 
Section~\ref{sec:constructing_char_trees}, where we also
prove the main theorem.

\section{An Introduction to \mso\ Logic}\label{sec:msol}
In this section, we present a brief introduction to monadic second order
logic. We follow Ebbinghaus and Flum~\cite{EF99}.
\emph{Monadic second-order logic (MSOL)} is an extension of first-order logic
which allows quantification over sets of objects.
To define the
syntax of MSO, fix a \emph{vocabulary}~$\tau$ which is a finite set
of relation symbols~$P,Q,R, \ldots$ each associated with a natural
number known as its \emph{arity}.  

A \emph{structure~$\ms{A}$ over vocabulary~$\tau$} 
(also called a \emph{$\tau$-structure}) consists of a 
set~$A$ called the \emph{universe} of~$\ms{A}$ and a 
$p$-ary relation~$R^{\ms{A}} \subseteq A \times \cdots \times A$ ($p$ times)
for every $p$-ary relation symbol~$R$ in~$\tau$. If the universe 
is empty then we say that the structure is \emph{empty}. 
Graphs can be expressed in a natural way as relational
structures with universe the vertex set 
and a vocabulary consisting of a single binary (edge) relation symbol. To express
a $t$-labeled graph~$G$, we may use a vocabulary~$\tau$ consisting of
the binary relation symbol~$E$ (representing, as usual, the edge relation) 
and~$t$ unary relation symbols~$L_1, \ldots, L_t$, where~$L_i$ represents 
the set of vertices labeled~$i$.

A formula in MSO is a 
string of symbols from an alphabet that consists of
\begin{itemize}
\item the \emph{relation symbols} of~$\tau$
\item a countably infinite set of \emph{individual variables}~$x_1, x_2, \ldots$
\item a countably infinite set of \emph{set variables}~$X_1, X_2, \ldots$
\item $\neg$, $\vee$, $\wedge$ (the connectives \emph{not}, \emph{or}, \emph{and}) 
\item $\exists$, $\forall$ (the \emph{existential quantifier} and the
      \emph{universal quantifier})
\item $=$ (the \emph{equality} symbol)
\item $($, $)$ (the \emph{bracket} symbols).
\end{itemize}

The \emph{formulas} of MSO over the vocabulary~$\tau$ 
are strings that are obtained from finitely many applications
of the following rules:
\begin{enumerate}
\item If~$t_1$ and~$t_2$ are individual (respectively, set) 
variables then~$t_1 = t_2$ is a formula.
\item If~$R$ is an $p$-ary relation symbol in~$\tau$ 
and~$t_1, \ldots, t_r$ are individual variables, 
then~$Rt_1, \ldots, t_r$ is a formula.
\item If~$X$ is a set variable and~$t$ 
is an individual variable then~$Xt$ is a formula.
\item If~$\varphi$ is a formula then~$\neg \varphi$ is a formula.
\item If~$\varphi$ and~$\psi$ are formulas then~$(\varphi \vee \psi)$ is a
      formula.
\item  If~$\varphi$ and~$\psi$ are formulas then~$(\varphi \wedge \psi)$ is a
      formula.
\item If~$\varphi$ is a formula and~$x$ an individual variable
      then~$\exists x \varphi$ is a formula.
\item If~$\varphi$ is a formula and~$x$ an individual variable
      then~$\forall x \varphi$ is a formula.
\item If~$\varphi$ is a formula and~$X$ a set variable then~$\exists X \varphi$ is a formula.
\item If~$\varphi$ is a formula and~$X$ a set variable then~$\forall X \varphi$ is a formula.
\end{enumerate} 
The formulas obtained by~$1$, $2$, or~$3$ above are \emph{atomic formulas}.
Formulas of types~$6$, $8$, and $10$ are called~\emph{universal}, and formulas
of types~$5$, $7$, and $9$ are \emph{existential}.

The \emph{quantifier rank}~$\qr{\varphi}$ of a formula~$\varphi$ is the maximum
number of nested quantifiers occurring in it.
\[\begin{array}{rclrcl}
\qr{\varphi}           & := & 0, \, \, \mbox{if $\varphi$ is atomic}; \hspace{1cm}&
\qr{\exists x \varphi} & := & \qr{\varphi} + 1; \\
\qr{\neg \varphi}      & := & \qr{\varphi}; \hspace{1cm} &
\qr{\exists X \varphi} & := & \qr{\varphi} + 1; \\
\qr{\varphi \vee \psi} & := & \max\{\qr{\varphi}, \qr{\psi}\};
 \hspace{1cm}& 
\qr{\forall x \varphi} & := & \qr{\varphi} +1. \\
\qr{\forall X \varphi} & := & \qr{\varphi} +1; & & & \\ 
\end{array}\] 
A variable in a formula is \emph{free} if it is not within the scope of 
a quantifier. A formula without free variables is called a \emph{sentence}.
By $\free(\varphi)$ we denote the set of free variables of~$\varphi$.

We now assign meanings to the logical symbols by defining the
\emph{satisfaction relation}~$\ms{A} \models \varphi$. Let~$\ms{A}$ 
be a $\tau$-structure. An \emph{assignment} in $\ms{A}$ 
is a function~$\alpha$ that assigns individual variables
values in $A$ and set variables subsets of~$A$. 
For an individual variable~$x$ and an assignment~$\alpha$, we
let~$\alpha[x/a]$ denote an assignment that agrees
with $\alpha$ except that it assigns the value~$a \in A$ to~$x$. 
The symbol~$\alpha [X/B]$ has the same meaning for a set variable~$X$
and a set $B \subseteq A$.
We define
the relation~$\ms{A} \models \varphi[\alpha]$ ($\varphi$ is true in
$\ms{A}$ under $\alpha$) as follows:

\medskip

\begin{tabular}{lcl}
$\ms{A} \models t_1 = t_2[\alpha]$ & iff &
 $\alpha(t_1) = \alpha(t_2)$ \\
$\ms{A} \models Rt_1 \ldots t_n[\alpha]$ & iff & 
$R^{\ms{A}}\alpha(t_1) \ldots \alpha(t_n)$ \\
$\ms{A} \models \neg \varphi[\alpha]$ & iff & 
not $\ms{A} \models \varphi [\alpha]$ \\
$\ms{A} \models (\varphi \vee \psi)[\alpha]$ & iff & 
$\ms{A} \models \varphi [\alpha]$  or $\ms{A} \models \psi [\alpha]$ \\
$\ms{A} \models (\varphi \wedge \psi)[\alpha]$ & iff & 
$\ms{A} \models \varphi [\alpha]$  and $\ms{A} \models \psi [\alpha]$ \\
$\ms{A} \models \exists x \varphi [\alpha]$ & iff & 
there is an $a \in A$ such that $\ms{A} \models \varphi [\alpha[x/a]]$\\
$\ms{A} \models \forall x \varphi [\alpha]$ & iff & 
for all $a \in A$ it holds that $\ms{A} \models \varphi [\alpha[x/a]]$\\
$\ms{A} \models \exists X \varphi [\alpha]$ & iff & 
there exists $B \subseteq A$ such that $\ms{A} \models \varphi [\alpha[X/B]]$\\
$\ms{A} \models \forall X \varphi [\alpha]$ & iff & 
for all $B \subseteq A$ it holds that $\ms{A} \models \varphi [\alpha[X/B]]$\\
\end{tabular}

\section{The $\msoeq{q}$-Relation and its Characterization}
\label{sec:q_equiv}
Given a vocabulary~$\tau$ and a natural number~$q$, one can define an equivalence relation 
on the class of $\tau$-structures as follows. For $\tau$-structures $\ms{A}$ 
and~$\ms{B}$ and~$q \in \mathbf{N}$, define~$\ms{A} \msoeq{q} \ms{B}$ ($q$-equivalence) 
if and only if~$\ms{A} \models \varphi \Longleftrightarrow \ms{B} \models \varphi$  
for all \mso\ sentences~$\varphi$
of quantifier rank at most~$q$. In other words, two structures are $q$-equivalent if and only if
no sentence of quantifier rank at most~$q$ can distinguish them. 

We provide a characterization of the relation~$\msoeq{q}$ 
using objects called characteristic trees of depth~$q$. 
We show that two $\tau$-structures~$\ms{A}$ and~$\ms{B}$ have 
identical characteristic trees of depth~$q$ if and only 
if~$\ms{A} \msoeq{q} \ms{B}$. We shall see that characteristic trees  
are specially useful because their size is ``small'' and for graphs
of bounded rankwidth can be constructed efficiently given their 
parse tree decomposition. However before we can do that, we need 
a few definitions.

\begin{definition}[Induced Structure and Sequence]
\label{def:induced_str}
\rm
Let~$\ms{A}$ a $\tau$-structure with universe~$A$ 
and let~$\bar{c} = c_1, \ldots, c_m \in A^m$. The 
structure~$\ms{A}' = \ms A[\bar c] = \ms{A}[\{c_1, \ldots, c_m\}]$ induced 
by~$\bar c$ is a $\tau$-structure 
with universe~$A' = \{c_1, \ldots, c_m\}$ and 
interpretations~$P^{\ms{A}'} := P^{\ms{A}} \cap \{c_1, \ldots, c_m\}^r$
for every relation symbol~$P \in \tau$ of arity~$r$.

For an arbitrary sequence of objects $\bar c = c_1, \ldots, c_m$
and a set $U$, we let $\bar c[U]$ be the subsequence of $\bar c$ that contains only
objects in~$U$.  For a sequence of sets
$\bar{C} = C_1, \ldots, C_p$ we let~$\bar{C} \cap A$ 
denote the sequence $C_1 \cap U, \ldots, C_p \cap U$ and write $\bar C \cap \bar c$ for 
$C_1 \cap \{c_1, \ldots, c_m\}, \ldots, C_p \cap \{c_1, \ldots, c_m\}$.
\end{definition}

\begin{definition}[Partial Isomorphism]
\rm
Let~$\ms{A}$ and~$\ms{B}$ be structures over the vocabulary $\tau$ 
with universes~$A$ and~$B$, respectively, and let~$\pi$ be a map such 
that~$\dom{\pi} \subseteq A$ and~$\range{\pi} \subseteq B$.
The map~$\pi$ is said to be a \bem{partial isomorphism}
from~$\ms{A}$ to~$\ms{B}$
if
\begin{enumerate}
\item $\pi$ is one-to-one and onto;
\item for every $p$-ary relation symbol~$R \in \tau$ and 
all~$a_1, \ldots, a_p \in \dom{\pi}$,
\[R^{\ms{A}}a_1, \ldots,a_p \qquad \mbox{iff} \qquad
R^{\ms{B}}\pi(a_1), \ldots, \pi(a_p).\]
\end{enumerate}
If~$\dom{\pi} = A$ and $\range{\pi} = B$,
then~$\pi$ is an \bem{isomorphism} between~$\ms{A}$ and~$\ms{B}$
and~$\ms{A}$ and~$\ms{B}$ are \bem{isomorphic}.

Let $(\ms A, \bar A)$ and $(\ms B, \bar B)$ be tuples, where
$\bar A = A_1, \ldots, A_s$ and $\bar B = B_1, \ldots, B_s$, $s \ge 0$,
such that for all $1 \leq i \leq s$, we have $A_i \subseteq A$ and $B_i \subseteq B$. 
We say that $\pi$ is a partial isomorphism between
$(\ms A, \bar A)$ and $(\ms B, \bar B)$ if
\begin{enumerate}
\item $\pi$ is a partial isomorphism between $\ms A$ and $\ms B$,
\item for each $a \in \dom{\pi}$ and all $1 \le i \le s$, it holds that
$a \in A_i$ iff $\pi(a) \in B_i$.
\end{enumerate}
The tuples $(\ms A, \bar A)$ and $(\ms B, \bar B)$ are \bem{isomorphic}
if $\pi$ is an isomorphism between $\ms{A}$ and $\ms{B}$.
\end{definition}

In Definition~\ref{def:induced_str} of an induced structure we ignore
the order of the elements in~$\bar{c}$.  For the purposes in this paper,
the order in which the elements are chosen is important
because it is used to map variables in the formula to elements in the structure.
Moreover, elements could repeat in the vector~$\bar{c}$
and this fact is lost when we consider the induced structure~$\ms{A}[\bar{c}]$.
To capture both the order and the multiplicity of the elements in
vector~$\bar{c}$ in the structure~$\ms{A}[\bar{c}]$,
we introduce the notion of an \emph{ordered induced structure}.

Let $U$ be a set and $\equiv$ be an equivalence relation on $U$.  For $u
\in U$, we let $[u]_{\equiv} = \{\,u' \in U \mid u \equiv u'\,\}$ be
the \bem{equivalence class} of $u$ under~$\equiv$, and $U / {\equiv}
= \{\,[u]_{\equiv} \mid u \in U\}$ be the \bem{quotient space} of~$U$
under~$\equiv$.

A vector~$\bar{c} = c_1, \ldots, c_m \in A^m$ defines a natural
equivalence relation~$\eqv{\bar{c}}$ on the set~$[m] = \{1,\ldots,m\}$:
for~$i,j \in [m]$, we have~$i \eqv{\bar{c}}j$ if and only if~$c_i = c_j$.
For simplicity, we shall write $[i]_{\bar c}$ for $[i]_{\eqv{\bar c}}$.

\begin{definition}[Ordered Induced Structure]
\label{def:ordind}
\rm
Let $\ms{A}$ be a $\tau$-structure and~$\bar{c} = c_1, \ldots, c_m \in A^m$.
The \bem{ordered structure induced by~$\bar{c}$} is the $\tau$-structure~$\ms H = \ordind(\ms A, \bar c)$
with universe~$H = [m]/{\eqv{\bar c}}$ such that
the map $h\colon c_i \mapsto [i]_{\bar c}$, $1 \leq i \leq m$, is an
isomorphism between~$\ms{A}[\bar{c}]$ and~$\ms{H}$.

Let $\bar C = C_1, \ldots, C_p$ with $C_i \subseteq A$, $1 \le i \le p$.
Then we let
$$
\ordind(\ms A, \bar c, \bar C) := \left(\ordind(\ms A, \bar c), \bar h, h(\bar C \cap \bar c)\right),
$$
where $h\colon c_i \mapsto [i]_{\bar c}$, $1 \leq i \leq m$,
$\bar h = h(1),\ldots,h(m)$ and 
$h(\bar C \cap \bar c) = h(C_1 \cap \bar c),\ldots,h(C_p \cap \bar c)$.
\end{definition}

\begin{figure}[tb]
\centerline{\includegraphics{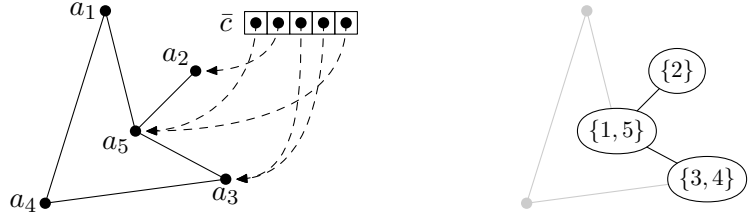}}
\caption{The vector~$\bar c = a_5a_2a_3a_3a_5$ lists vertices in the graph $\ms G$ on the left.
        The resulting ordered induced structure~$\ordind(\ms G, \bar c)$ is depicted in black on the right.
\label{fig:ordind}}
\end{figure}

Thus an ordered structure~$\ms H = \ordind(\ms A, \bar c)$ induced
by~$\bar{c}$ is simply the structure~$\ms{A}[\bar{c}]$
with element~$c_i$ being called~$[i]_{\bar c}$.
See Figure~\ref{fig:ordind} for an example.

\subsection{Model Checking Games and Characteristic Trees}
\label{subsec:model_check_char_trees}
Testing whether a non-empty structure models a formula can be specified 
by a \emph{model checking game} (also known as \emph{Hintikka game}, see
\cite{Hin73,Gra07}).
Let $\ms{A}$ be a $\tau$-structure with universe~$A$.  Let $\varphi$ be a
formula and $\alpha$ be an assignment to the free variables of~$\varphi$.
The game is played between
two players called the \emph{verifier} and the \emph{falsifier}. The
verifier tries to prove that $\ms{A} \models \varphi[\alpha]$ whereas the
falsifier tries to disprove this.
We assume without loss of generality that~$\varphi$ is in negation normal form, 
i.e., negations in~$\varphi$ appear only at the atomic level.  This can always
be achieved by applying simple rewriting rules such as
$\neg \forall x \varphi(x) \leadsto \exists x \neg \varphi(x)$.
The model checking game~$\mcg(\ms A, \varphi, \alpha)$
is positional with positions~$(\psi, \beta)$, where
$\psi$ is a subformula of~$\varphi$ and $\beta$ is an assignment to the free variables
of~$\psi$.  
The game starts at position~$(\varphi, \alpha)$.  At a
position $(\forall X \psi(X), \beta)$, the falsifier chooses a subset~$D \subseteq A$, 
and the game continues at position $(\psi, \beta[X/D])$.
Similarly, at a position $(\forall x \psi(x), \beta)$
or~$(\psi_1 \wedge \psi_2, \beta)$, the falsifier chooses
an element~$d \in A$ or some~$\psi := \psi_i$ for some~$1 \leq i \leq 2$ and
the game then continues at position~$(\psi, \beta[x/d])$ or $(\psi, \beta)$, respectively.
The verifier moves analogously at existential formulas.
If an element is chosen then the move is called a \emph{point move}; if a set is  
chosen then the move is a \emph{set move}.
The game ends once a position $(\psi, \beta)$ is reached, such that $\psi$ is an atomic
or negated formula.  The verifier \emph{wins} if and only if~$\ms{A} \models \psi[\beta]$.
We say that the verifier has a \emph{winning strategy} if they win every
play of the game irrespective of the choices made by the falsifier.

It is well known that the model checking game characterizes the satisfaction relation
$\models$.  The following lemma can easily be shown by induction over the structure
of~$\varphi$.

\begin{lemma}[cf., \cite{Gra07}]
\label{lemma:ef_model_checking}
Let~$\ms{A}$ be a $\tau$-structure, let~$\varphi$ be an \mso\ formula, and
let $\alpha$ be an assignment to the free variables of~$\varphi$.
Then~$\ms{A} \models \varphi[\alpha]$ if and only if the verifier has a winning strategy
on the model checking game on~$\ms{A}$, $\varphi$, and~$\alpha$.
\end{lemma}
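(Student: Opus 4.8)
The plan is to prove both directions simultaneously by structural induction on $\varphi$, using the standing assumption that $\varphi$ is in negation normal form. In negation normal form every non-atomic subformula is built by one of the connectives $\vee,\wedge$ or by a quantifier $\exists x,\forall x,\exists X,\forall X$, and negation occurs only at the atomic level; I would make this explicit at the outset, citing the rewriting rule already given (e.g. $\neg\forall x\,\psi \leadsto \exists x\,\neg\psi$) to justify that it is without loss of generality. The induction hypothesis I would carry is the statement of the lemma applied to every proper subformula $\psi$ of $\varphi$ and every assignment $\beta$ to the free variables of $\psi$: namely $\ms{A} \models \psi[\beta]$ if and only if the verifier wins $\mcg(\ms{A},\psi,\beta)$. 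Since each move replaces the current formula by a strictly smaller subformula (the chosen disjunct/conjunct, or the matrix of a quantifier), every play is finite and the game tree is finite; hence winning strategies are well defined by backward induction on that tree, and structural induction on $\varphi$ is exactly the matching tool.

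For the base case I would take $\varphi$ atomic or a negated atom. Here the game starts at $(\varphi,\alpha)$ and immediately terminates, since this is already a halting position; by definition the verifier wins precisely when $\ms{A}\models\varphi[\alpha]$, so a winning strategy exists (the empty strategy) if and only if $\ms{A}\models\varphi[\alpha]$, which is the claim.

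The inductive step then splits into the six connective/quantifier cases, each obtained by matching a clause of the satisfaction relation with the corresponding game rule. For $\varphi=\psi_1\vee\psi_2$ the verifier moves, so the verifier wins from $(\varphi,\alpha)$ iff they win from $(\psi_i,\alpha)$ for \emph{some} $i$; by the induction hypothesis this is equivalent to $\ms{A}\models\psi_i[\alpha]$ for some $i$, i.e. to $\ms{A}\models(\psi_1\vee\psi_2)[\alpha]$. The case $\varphi=\exists x\,\psi$ is identical, with the verifier choosing $a\in A$ and continuing at $(\psi,\alpha[x/a])$, matching the clause ``there is an $a\in A$ such that $\ms{A}\models\psi[\alpha[x/a]]$''; the case $\exists X\,\psi$ is the same with a subset $B\subseteq A$. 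The three universal cases $\wedge,\forall x,\forall X$ are dual: there the falsifier moves, so the verifier wins iff they win from \emph{every} successor position, which by the induction hypothesis matches the ``for all'' clause of $\models$. In each case I would note that the assignment passed to the subgame is exactly the one $\models$ refers to (the restriction of $\alpha$ to the free variables of the subformula, possibly extended at $x$ or $X$), so the induction hypothesis applies verbatim.

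As this is a routine structural induction there is no single hard obstacle; the two points demanding care are the reliance on negation normal form---which is precisely what lets me omit a $\neg\psi$ case in the inductive step and thereby avoid swapping the roles of the two players---and keeping the bookkeeping of free variables and assignments consistent between game positions and the clauses of the satisfaction relation.
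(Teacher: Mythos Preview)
Your proposal is correct and matches the paper's approach: the paper does not spell out a proof but simply states that the lemma ``can easily be shown by induction over the structure of~$\varphi$,'' which is exactly the structural induction you carry out, with the same reliance on negation normal form to avoid a separate negation case.
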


A model checking game on a $\tau$-structure~$\ms{A}$ and 
a formula~$\varphi$ with quantifier rank~$q$ can be represented 
by a tree of depth~$q$ in which the nodes represent positions 
in the game and the edges represent point and set moves made by the players. 
Such a tree is called a \emph{game tree} and is used in combinatorial
game theory for analyzing games (see~\cite{BCG82}, for 
instance).

For our purposes, we define a notion related to game trees 
called \emph{full characteristic trees} which are finite rooted trees,
where the nodes represent positions and edges
represent moves of the game. A node is a tuple that represents 
the sets and elements that have been chosen thus far.
The node can be thought of as a succinct representation of the
state of the game played till the position represented by that node.
However, note that a full characteristic tree depends on the
quantifier rank~$q$ and \emph{not} on a particular formula.

\begin{definition}[Full Characteristic Trees]
\label{defn:full_char_tree}
\rm
Let~$\ms{A}$ be a $\tau$-structure with universe~$A$ 
and let~$q \in \mathbf{N}$. For elements
$\bar{c} = c_1, \ldots, c_m \in A^m$, sets~$\bar{C} = C_1, \ldots, C_p$ 
with~$C_i \subseteq A$, $1 \le i \le p$, let~$T = \fullchar{q}{\ms{A}}{\bar{c}}{\bar{C}}$
be a finite rooted tree such that
\begin{enumerate}
\item $\root(T) = (\ms{A}[\bar{c}],\bar{c},\bar{C} \cap \bar{c})$,
\item if~$m+p+1 \leq q$ then the subtrees of the root of 
$\fullchar{q}{\ms{A}}{\bar{c}}{\bar{C}}$
is the set
$$
	\big\{\, \fullchar{q}{\ms{A}}{\bar{c} d}{\bar{C}} \bigm| d \in A\,\big\} \cup  
	\big\{\, \fullchar{q}{\ms{A}}{\bar{c}}{\bar{C} D} \bigm| D \subseteq A\,\big\}.
$$
\end{enumerate}
The \bem{full characteristic tree of depth~$q$} for~$\ms{A}$, 
denoted by $\fc{q}{\ms{A}}$, is defined 
as~$\fullchar{q}{\ms{A}}{\varepsilon}{\varepsilon}$, 
where~$\varepsilon$ is the empty sequence. 
\end{definition}

Let $T = (V, E)$ be a rooted tree.  We let $\root(T)$ be the root of
$T$ and for~$u \in V$ we let~$\children_T(u) = \{\,v \in V \mid (u,v) \in E \,\}$
and $\subtree_T(u)$ be a subtree of~$T$ rooted at~$u$,
and~$\subtrees(T) = \{\,\subtree_T(u) \mid u \in \children_T(\root(T))\,\}$. 

We now define a model checking game $\mcg(F, \varphi, \bar x, \bar X)$
on full characteristic trees~$F = \fullchar{q}{\ms{A}}{\bar c}{\bar C}$ and formulas~$\varphi$ with~$\qr{\varphi} \leq q$,
where $\bar x = x_1, \ldots, x_m$ are the free object variables of $\varphi$,
$\bar X = X_1, \ldots, X_p$ are the free set variables of $\varphi$,
$\bar c = c_1, \ldots, c_m \in A^m$, and $\bar C = C_1, \ldots, C_p$ with $C_i \subseteq A$, $1 \le i \le p$.
The rules are similar to the classical model checking game $\mcg(\ms{A}, \varphi, \alpha)$.
The game is positional and played
by two players called the \emph{verifier} and the \emph{falsifier} and
is defined over subformulas~$\psi$ of~$\varphi$.
However instead of choosing sets and elements explicitly, the tree~$F$
is traversed top-down.  At the same time, we ``collect'' the list of variables the players encountered,
such that we can make the assignment explicit once the game ends.
The game starts at the position~$(\varphi, \bar x, \bar X, \root(F))$.
Let $(\psi, \bar y, \bar Y, v)$ be the position at which the game is being played, where $v = (\ms H, \bar d, \bar D)$
is a node of $\fullchar{q}{\ms A}{\bar c}{\bar C}$, and $\psi$ is a subformula of~$\varphi$ with
$\free(\psi) = \bar y \cup \bar Y$.
At a position $(\forall X \vartheta(X), \bar y, \bar Y, v)$ the falsifier chooses a child $u =
(\ms H, \bar d, \bar D D)$ of $v$, where $D \subseteq A$, and the game
continues at position~$(\vartheta, \bar y, \bar Y X, u)$.  Similarly, at a position $(\forall
x \vartheta(x), \bar y, \bar Y, v)$ the falsifier chooses a child $u = (\ms H', \bar d d,
\bar D)$, where $d \in A$, and the game continues in $(\vartheta, \bar y x, \bar Y, u)$, and at a position
$(\vartheta_1 \wedge \vartheta_2, \bar y, \bar Y, v)$, the falsifier chooses some $1 \le i \le 2$,
and the game continues at position~$(\vartheta_i, \bar y, \bar Y, v)$.
The verifier moves analogously at existential formulas.

The game stops once an atomic or negated 
formula has been reached. 
Suppose that a particular play of the game 
ends at a position $(\psi, \bar y, \bar Y, v)$, where 
$\psi$ is a negated atomic or atomic formula with 
\[\free(\psi) = \{y_1, \ldots, y_{s}, Y_1, \ldots, Y_t\}\] 
and $v = (\ms{H}, \bar{d}, \bar{D})$ some node of~$F$, where
$\bar d = d_1, \ldots, d_s$ and $\bar D = D_1, \ldots, D_t$.
Let $\alpha$ be an assignment to the free variables of $\varphi$, such that
$\alpha(y_i) = d_i$, $1 \le i \le s$, and $\alpha(Y_i) = D_i$, $1 \le i \le t$.
The verifier \emph{wins} the game if and only if $\ms{H} \models \psi[\alpha].$
The verifier has a \emph{winning strategy} if and only if they can win every play of the game
irrespective of the choices made by the falsifier.
In what follows, we identify a position $(\psi, \bar y, \bar Y, v)$
of the game $\mcg(\fullchar{q}{\ms A}{\bar c}{\bar C}, \varphi, \bar x, \bar X)$, where
$v = (\ms H, \bar d, \bar D)$, with the game $\mcg(\fullchar{q}{\ms A}{\bar d}{\bar D}, \psi, \bar y, \bar Y)$.

\begin{lemma}\label{lemma:games_full_char_trees}
Let~$\ms{A}$ be a $\tau$-structure and let~$\varphi$ be an \mso\ formula
with $\qr{\varphi} \leq q$ and free variables $\{x_1, \ldots,x_m, X_1, \ldots, X_m\}$. 
Let $\alpha$ be an assignment to the free variables of $\varphi$. 
Then the verifier has a winning strategy in the model checking game $\mcg(\ms A, \varphi, \alpha)$ if
and only if the verifier has a winning strategy in the model checking game $\mcg(\fullchar{q}{\ms A}{\bar c}{\bar C}, \varphi, \bar x, \bar X)$,
where $\bar c = \alpha(x_1), \ldots, \alpha(x_m)$ and $\bar C = \alpha(X_1), \ldots, \alpha(X_p)$.
\end{lemma}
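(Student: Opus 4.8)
The statement asserts that the model checking game on a structure coincides, in terms of who wins, with the model checking game on its full characteristic tree. The natural approach is induction on the structure of $\varphi$ (equivalently, on $\qr{\varphi}$ together with the number of Boolean connectives above the current subformula). Using the identification remarked just before the lemma — namely that a position $(\psi,\bar y,\bar Y,v)$ with $v=(\ms H,\bar d,\bar D)$ in the game on $\fullchar{q}{\ms A}{\bar c}{\bar C}$ \emph{is} the game $\mcg(\fullchar{q}{\ms A}{\bar d}{\bar D},\psi,\bar y,\bar Y)$ — the induction reduces to showing that each single move in $\mcg(\ms A,\varphi,\alpha)$ corresponds bijectively to a single move in the tree game, and that at the leaves the winning conditions agree.

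First I would set up the induction hypothesis in the right generality: for every subformula $\psi$ of $\varphi$ with $\qr{\psi}+(\text{number of free variables already bound above})\le q$ and every assignment $\beta$ to $\free(\psi)$, the verifier wins $\mcg(\ms A,\psi,\beta)$ iff the verifier wins $\mcg(\fullchar{q}{\ms A}{\bar d}{\bar D},\psi,\bar y,\bar Y)$ where $\bar d,\bar D$ list the $\beta$-values of the free variables. The crucial bookkeeping point — which is also where I expect the main friction — is verifying that the depth budget $m+p+1\le q$ in clause~2 of Definition~\ref{defn:full_char_tree} exactly matches the quantifier rank: when we descend past a quantifier in $\psi$, the number of free variables of the body goes up by one and $\qr{}$ goes down by one, so the invariant "$\qr{}$ + (number of bound-above variables) $\le q$" is preserved, and the child required by the move genuinely exists in the tree. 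One must check the boundary case carefully: if the game is at a quantified formula but the tree node has no children (budget exhausted), this cannot happen under the invariant since then $\qr{\psi}\ge 1$ would force $m+p+1\le q$.

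The base case is the atomic/negated-atomic case. Here I would invoke the fact that $\root(\fullchar{q}{\ms A}{\bar d}{\bar D})=(\ms A[\bar d],\bar d,\bar D\cap\bar d)$, so the leaf game's winning condition "$\ms H\models\psi[\alpha]$" with $\ms H=\ms A[\bar d]$ and $\alpha$ recovered from $\bar d,\bar D$ must be shown equivalent to "$\ms A\models\psi[\beta]$". This is exactly the statement that truth of an atomic or negated-atomic formula is preserved under passing to the induced substructure on the tuple of its free-variable values — which holds because atomic formulas only refer to the relation symbols restricted to those elements, and set-membership atoms $Xt$ only care about $\beta(X)\cap\{\beta(t)\}$, precisely the data retained in $\bar D\cap\bar d$. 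The inductive step for $\wedge,\vee$ is immediate (same player chooses $i$ in both games, no tree descent), and for the quantifier cases I would spell out the bijection: a falsifier point move choosing $d\in A$ in $\mcg(\ms A,\forall x\,\vartheta,\beta)$ corresponds to choosing the child $(\ms A[\bar d d],\bar d d,\bar D)$ in the tree game, and by the definition of the tree every such $d$ yields such a child and vice versa; then apply the induction hypothesis to $\vartheta$ with the extended assignment, using the identification of positions with subgames. Set moves are handled identically with $D\subseteq A$ and the child $(\ms H,\bar d,\bar D D)$. Finally, taking $\psi=\varphi$, $\bar d=\bar c$, $\bar D=\bar C$, $v=\root(F)$ gives the lemma, and combined with Lemma~\ref{lemma:ef_model_checking} one gets the model-checking characterization of $\models$ on full characteristic trees as a corollary.

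The main obstacle, as noted, is not conceptual but organizational: stating the induction invariant so that the depth bound $m+p+1\le q$, the quantifier rank, and the "position = subgame" identification all line up cleanly, so that the existence of the required child is never in doubt and the leaf structure $\ms A[\bar d]$ is exactly what the winning condition needs.
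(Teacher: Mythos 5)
Your proposal is correct and takes essentially the same approach as the paper: the paper's entire proof is the one-line observation that every play of $\mcg(\ms A,\varphi,\alpha)$ can be simulated move-for-move in $\mcg(\fullchar{q}{\ms A}{\bar c}{\bar C},\varphi,\bar x,\bar X)$ and vice versa, and your structural induction — with the bijection between point/set moves and children, and the leaf-case check that atomic truth depends only on $\ms A[\bar d]$ and $\bar D\cap\bar d$ — is exactly that simulation spelled out in detail. The depth-budget bookkeeping you flag is treated no more carefully in the paper itself, so nothing in your outline needs correction.
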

\begin{proof}
The proof consists in observing that any play of the model checking 
game~$\mcg(\ms A, \varphi, \alpha)$ can be simulated in 
$\mcg(\fullchar{q}{\ms A}{\bar c}{\bar C}, \varphi, \bar x, \bar X)$ and vice versa.
\qed
\end{proof}

Lemma~\ref{lemma:games_full_char_trees} showed that a full characteristic
tree of depth~$q$ for a structure~$\ms{A}$ can be used to simulate the model checking
game on~$\ms{A}$ and any formula~$\varphi$ of quantifier rank at most~$q$.
However the size of such a tree is of the order~$(2^n + n)^q$, where~$n$
is the number of elements in the universe of~$\ms{A}$. We now show that one
can ``collapse'' equivalent branches of a full characteristic tree to
obtain a much smaller labeled tree (called a reduced characteristic tree) 
that is in some sense equivalent to the original (full) tree. We will then show
that for a graph~$G$ of rankwidth at most~$t$, the reduced characteristic tree of~$G$ 
is efficiently computable given a $t$-labeled parse tree decomposition 
of~$G$. We achieve this collapse by replacing the induced structures 
$\ms A[\bar c]$ in the full characteristic tree by a more generic, 
implicit representation --- that of their ordered induced 
substructures $\ordind(\ms A, \bar c)$. 

\begin{definition}[Reduced Characteristic Trees]
\rm
Let~$\ms{A}$ be a $\tau$-structure and let $q \in \mathbf{N}$.
For elements $\bar{c} = c_1, \ldots, c_m \in A^m$ and sets $\bar{C} = C_1, \ldots, C_p$ 
with~$C_i \subseteq A$, $1 \le i \le p$, we let $\redchar{q}{\ms{A}}{\bar{c}}{\bar{C}}$
be a finite rooted tree such that
\begin{enumerate}
\item $\root(\redchar{q}{\ms{A}}{\bar{c}}{\bar{C}}) = \ordind(\ms A, \bar c, \bar C)$,
\item if~$m+p+1 \leq q$ then the subtrees of the root of 
$\redchar{q}{\ms{A}}{\bar{c}}{\bar{C}}$
is the set
$$
	\{\, \redchar{q}{\ms{A}}{\bar{c} d}{\bar{C}} \mid d \in A\,\} \cup 
	\{\, \redchar{q}{\ms{A}}{\bar{c}}{\bar{C} D} \mid D \subseteq A\,\}.
$$
\end{enumerate}
The \bem{reduced characteristic tree of depth~$q$} 
for the structure~$\ms{A}$, denoted by $\rc{q}{\ms{A}}$,  
is defined to be $\redchar{q}{\ms{A}}{\varepsilon}{\varepsilon}$,
where~$\varepsilon$ is the empty sequence. 
\end{definition}

\begin{figure}[tbp]
\centerline{\includegraphics{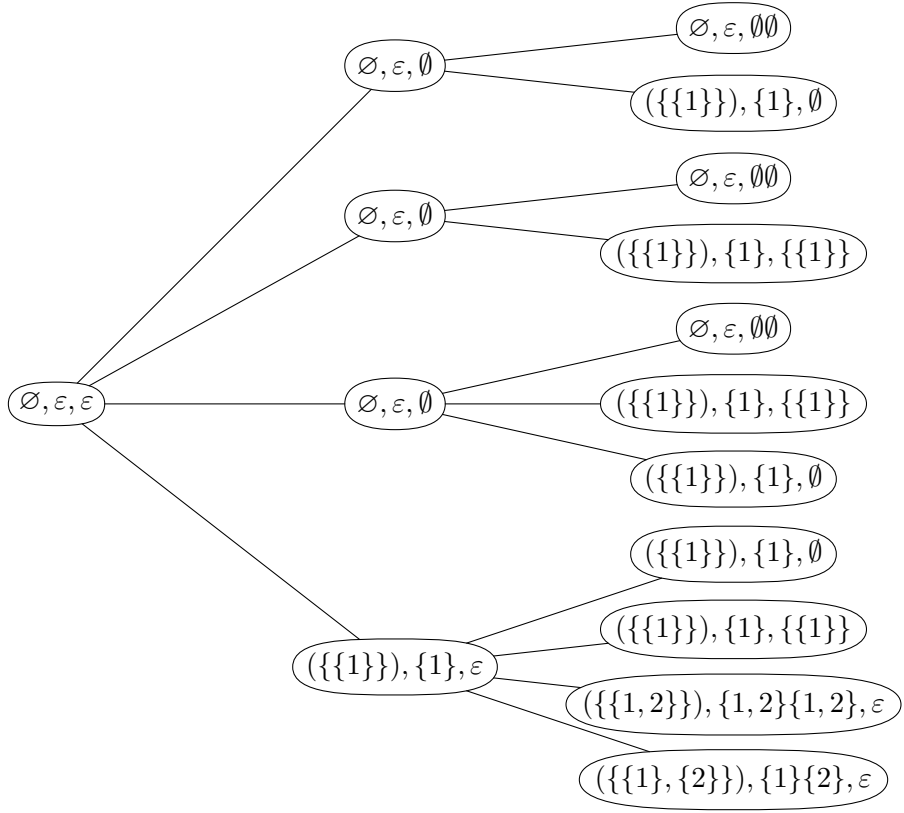}}
\caption{The tree $\rc{2}{\ms A}$ for a $\tau$-structure $\ms A$ with $\tau = \emptyset$ and
$A = \{a_1,a_2\}$.  Here, $\varnothing$ denotes an empty structure, and $\emptyset\emptyset$ is the sequence of 
two empty sets.  The bottom right node $(\ms H, \bar c, \bar C) = \big((\{\{1\},\{2\}\}), \{1\}\{2\}, \varepsilon\big)$
represents, at the same time, the identical subtrees
$\redchar{2}{\ms A}{a_1a_2}{\varepsilon}$ and $\redchar{2}{\ms A}{a_2a_1}{\varepsilon}$.
The universe of $\ms H$ is $H = \{[1]_{a_1a_2}, [2]_{a_1a_2}\} = \{[1]_{a_2a_1}, [2]_{a_2a_1}\} = \{\{1\},\{2\}\}$, since elements $a_1, a_2$
and $a_2, a_1$, respectively, have been chosen in this order. 
No set has been chosen, hence the empty sequence $\bar C = \varepsilon$.
Similarly, the next node in that column, $\big((\{\{1,2\}\}), \{1,2\}\{1,2\}, \varepsilon\big)$,
represents the trees $\redchar{2}{\ms A}{a_1a_1}{\varepsilon}$
and $\redchar{2}{\ms A}{a_2a_2}{\varepsilon}$.  Here the universe is $\{\{1,2\}\}$ since the same element
has been chosen twice.  Note that the root node has only five subtrees in total, since
$\redchar{2}{\ms A}{\varepsilon}{\{a_1\}} = \redchar{2}{\ms A}{\varepsilon}{\{a_2\}}$ (third 
subtree from the top), and $\redchar{2}{\ms A}{a_1}{\varepsilon} = \redchar{2}{\ms A}{a_2}{\varepsilon}$
(bottom subtree).
\label{fig:redchar}
}
\end{figure}

See Figure~\ref{fig:redchar} for an example.
One can define the model checking game $\mcg(R, \varphi, \bar x, \bar X)$ on a 
tree $R = \redchar{q}{\ms{A}}{\bar c}{\bar C}$
in exactly the same manner as $\mcg(\fullchar{q}{\ms{A}}{\bar c}{\bar C}, \varphi, \bar x, \bar X)$.
As mentioned before, our interest in $\redchar{q}{\ms{A}}{\bar c}{\bar C}$ lies in that:
\begin{enumerate}
\item they are equivalent to $\fullchar{q}{\ms{A}}{\bar c}{\bar C}$,
\item they are ``small''; and,
\item they are efficiently computable if~$\ms{A}$ is a graph of rankwidth at most~$t$. 
\end{enumerate}

We first show that the reduced characteristic tree $\redchar{q}{\ms A}{\bar{c}}{\bar{C}}$ 
is equivalent to its full counterpart $\fullchar{q}{\ms{A}}{\bar c}{\bar C}$.
\begin{lemma}
\label{lemma:games_red_char_trees}
Let~$\ms{A}$ be a $\tau$-structure and let $q \in \mathbf{N}$.
Let $\bar{c} = c_1, \ldots, c_m \in A^m$ and $\bar{C} = C_1, \ldots, C_p$ with~$C_i \subseteq A$, $1 \le i \le p$.
Let $F = \fullchar{q}{\ms A}{\bar c}{\bar C}$ and 
$R = \redchar{q}{\ms A}{\bar c}{\bar C}$.
Then the verifier has a winning strategy in the model checking game
$\mcg(F, \varphi, \bar x, \bar X)$
if and only if the verifier has a winning strategy in the game
$\mcg(R, \varphi, \bar x, \bar X)$, where
$\varphi \in \mso(\tau)$ with $\qr{\varphi} \le q$ with free object
variables $\bar x = x_1, \ldots, x_m$ and free set variables $\bar X = X_1, \ldots, X_p$.
\end{lemma}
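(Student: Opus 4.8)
The plan is to prove the lemma by induction on the structure of $\varphi$, quantifying universally over the sequences $\bar c, \bar C$, and to exploit the tight correspondence between the nodes of $F$ and the nodes of $R$. The crucial point is that the node $\root(\fullchar{q}{\ms{A}}{\bar{d}}{\bar{D}}) = (\ms{A}[\bar d], \bar d, \bar D \cap \bar d)$ of the full tree and the node $\root(\redchar{q}{\ms{A}}{\bar{d}}{\bar{D}}) = \ordind(\ms A, \bar d, \bar D)$ of the reduced tree are related by the isomorphism $h\colon c_i \mapsto [i]_{\bar d}$ of Definition~\ref{def:ordind}, which by construction carries $\ms A[\bar d]$ onto $\ordind(\ms A, \bar d)$, each chosen element $d_i$ onto $h(i)$, and each chosen set $D_j \cap \bar d$ onto $h(D_j \cap \bar d)$. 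Moreover both trees are built by the same recursion over the move sets $A$ (point moves) and $2^A$ (set moves), so performing the same move in both games keeps corresponding positions aligned; the only discrepancy is that two distinct moves may lead to the same child in $R$, which is harmless because, by the induction hypothesis, the winner of the remaining game depends only on that reduced subtree.

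First I would dispose of the base case, in which $\varphi$ is atomic or a negated atomic formula: the play ends immediately at the root, the verifier wins the $F$-game iff $\ms A[\bar c] \models \varphi[\alpha]$ with $\alpha(x_i) = c_i$ and $\alpha(X_j) = C_j$, and wins the $R$-game iff $\ordind(\ms A, \bar c) \models \varphi[\alpha']$ with $\alpha'(x_i) = h(i)$ and $\alpha'(X_j) = h(C_j \cap \bar c)$. Since the truth of an atomic formula with free variables among $\bar x, \bar X$ over $\ms A[\bar c]$ depends only on the induced structure and on the traces $C_j \cap \{c_1, \ldots, c_m\}$ of the assigned sets, the isomorphism $h$ immediately gives the equivalence $\ms A[\bar c] \models \varphi[\alpha] \Leftrightarrow \ordind(\ms A, \bar c) \models \varphi[\alpha']$. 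For the inductive step I would use the identification, made at the end of the definition of the game, of a position $(\psi, \bar y, \bar Y, v)$ with $v = (\ms H, \bar d, \bar D)$ with the game rooted at $v$; then each of the cases $\exists x$, $\forall x$, $\exists X$, $\forall X$ reduces to the observation that the available moves form the same set ($A$ or $2^A$) in both games, that the move $d$ (resp.\ $D$) leads from $\fullchar{q}{\ms A}{\bar c}{\bar C}$ to $\fullchar{q}{\ms A}{\bar c d}{\bar C}$ (resp.\ $\fullchar{q}{\ms A}{\bar c}{\bar C D}$) and likewise in the reduced tree, and that by the induction hypothesis the verifier wins the resulting $F$-subgame iff it wins the resulting $R$-subgame; quantifying the existence of a winning move over $A$ or $2^A$ (the verifier chooses at existential formulas, the falsifier at universal ones) then yields the claim for $\varphi$. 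The cases $\wedge$ and $\vee$ are identical except that no tree move is made and the player chooses one of the two conjuncts/disjuncts; a negation needs no separate treatment since $\varphi$ is assumed to be in negation normal form.

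I expect no serious obstacle. The only slightly delicate point is the base case, where one must remember to restrict the assigned sets to the elements actually occurring in $\bar c$ before applying $h$ --- which is exactly the information recorded by $\ordind(\ms A, \bar c, \bar C)$ through the term $h(\bar C \cap \bar c)$ --- and to keep the collected variable lists $\bar y, \bar Y$ synchronized with the chosen sequences $\bar c, \bar C$ via the identification convention. The ``collapsing'' of branches that distinguishes $R$ from $F$, and which is the whole reason for introducing $R$, is invisible to the game semantics precisely because moves range over $A$ and $2^A$ rather than over the branches of the tree; consequently the tautological surjection ``perform the same move'' from the children of a node of $F$ onto the children of the corresponding node of $R$, together with the induction hypothesis, suffices. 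Determinacy of the games is not needed as an external ingredient: the induction produces a winning strategy on each side directly from one on the other.
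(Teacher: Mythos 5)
Your proposal is correct and follows essentially the same route as the paper's proof: a simultaneous induction (the paper uses $q-m-p$ together with the structure of $\varphi$, you use structural induction with $\bar c,\bar C$ universally quantified), with the base case resolved by the isomorphism $h$ between $(\ms A[\bar c],\bar c,\bar C\cap\bar c)$ and $\ordind(\ms A,\bar c,\bar C)$, and the quantifier cases handled by matching each move $d\in A$ or $D\subseteq A$ with the corresponding subtrees $\fullchar{q}{\ms A}{\bar c d}{\bar C}$, $\redchar{q}{\ms A}{\bar c d}{\bar C}$ (resp.\ $\fullchar{q}{\ms A}{\bar c}{\bar C D}$, $\redchar{q}{\ms A}{\bar c}{\bar C D}$) and invoking the induction hypothesis. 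Your explicit remark that the collapsing of branches in $R$ is harmless because the outcome depends only on the reduced subtree is exactly the point the paper relies on implicitly, so no gap remains.
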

\begin{proof}
Without loss of generality, we assume $\qr{\varphi} = q$ (otherwise, pad~$\varphi$ with quantifiers).
The proof is by 
an induction on $q - m - p$ and the structure of~$\varphi$.
If $q = 0$, then
\begin{multline*}
\root(F) = (\ms A[\bar c], \bar c, \bar C \cap \bar c) \cong \\
(\ms H, h(1)\ldots h(m), h(\bar C \cap \bar c)) = \ordind(\ms A, \bar c, \bar C) = \root(R),
\end{multline*}
where $h\colon c_i \mapsto [1]_{\bar c}$, $1 \le i \le m$ is an isomorphism
between $(\ms A[\bar c], \bar C \cap \bar c)$ and $(\ms H, h(\bar C \cap \bar c))$.
The lemma therefore holds since MSO formulas cannot distinguish isomorphic structures.

Therefore assume that $q > 0$.
If $\varphi = (\psi_1 \wedge \psi_2)i$ or $\varphi = (\psi_1 \vee \psi_2)$, then the claim
immediately follows by the induction hypothesis for $\psi_i$, $1 \le i \le 2$.
Assume therefore that $\varphi = \exists X \psi(X)$ and suppose that the verifier has a winning
strategy in one of the games, say, in $\mcg(R, \varphi, \bar x, \bar X)$.  Then there is
a position $(\psi, \bar x, \bar X X, u)$, where $u \in \children_{R}(\root(R))$,
such that the verifier has a winning strategy in $\mcg(\subtree_{R}(u), \psi, \bar x, \bar X X)$
where $\subtree_{R}(u) = \redchar{q}{\ms A}{\bar c}{\bar C D}$ for some $D \subseteq A$.
By the induction hypothesis, the verifier has a winning strategy in $\mcg(F', \psi, \bar x, \bar X X)$,
where $F' = \fullchar{q}{\ms A}{\bar c}{\bar C D} \in \subtrees(F)$.
The verifier can therefore win $\mcg(F, \varphi, \bar x, \bar X)$ by choosing a
position~$(\psi, \bar x, \bar X X, \root(F'))$, which implies the claim.

If $\varphi = \forall x \psi(x)$, and the verifier has a winning
strategy in strategy in one of the games, say in $\mcg(R, \varphi,
\bar x, \bar X)$, consider a move of the falsifier to a position
$(\psi, \bar x x, \bar X, u)$ in $\mcg(F, \varphi, \bar x, \bar X)$,
where $u = \root(\fullchar{q}{\ms A}{\bar c d}{\bar C}))$ for some
$d \in A$.  Let $R' = \redchar{q}{\ms A}{\bar c d}{\bar C}$ be a subtree 
of the root of~$R$. 
The verifier has a winning strategy in the game $\mcg(R', \psi, \bar x x, \bar X)$,
and therefore, by the induction hypothesis, in $\mcg(\fullchar{q}{\ms
A}{\bar c d}{\bar C}, \psi, \bar x x, \bar X)$.

The remaining cases follow analogously.
\qed
\end{proof}

From Lemmas~\ref{lemma:ef_model_checking},
\ref{lemma:games_full_char_trees}, and~\ref{lemma:games_red_char_trees}, 
we obtain the important fact that reduced characteristic trees 
are in fact equivalent to their full counterparts and 
characterize the equivalence relation~$\msoeq{q}$.
\begin{corollary}
Let~$\ms{A}$ and~$\ms{B}$ be $\tau$-structures 
and~$q \in \mathbf{N}$. Then $\rc{q}{\ms{A}} = \rc{q}{\ms{B}}$ iff 
$\ms{A} \msoeq{q} \ms{B}$.
\end{corollary}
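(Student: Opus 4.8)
The plan is to chain together the three lemmas that precede the corollary, using the fact that satisfaction of an $\mso$ sentence is exactly captured by the existence of a winning strategy in the model checking game, and that the reduced characteristic tree faithfully encodes that game. Concretely, I would argue as follows for both directions, noting that the equivalence $\rc{q}{\ms A} = \rc{q}{\ms B}$ is symmetric in $\ms A$ and $\ms B$, so it suffices to prove one implication together with its contrapositive-style converse.

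First, for the ``$\Leftarrow$'' direction, assume $\rc{q}{\ms A} = \rc{q}{\ms B}$. Let $\varphi$ be any $\mso$ sentence with $\qr{\varphi} \le q$. Since $\varphi$ is a sentence, it has no free variables, so the relevant games are $\mcg(\rc{q}{\ms A}, \varphi, \varepsilon, \varepsilon)$ and $\mcg(\rc{q}{\ms B}, \varphi, \varepsilon, \varepsilon)$. Because $\rc{q}{\ms A}$ and $\rc{q}{\ms B}$ are literally the same tree, the verifier has a winning strategy in one game iff in the other. By Lemma~\ref{lemma:games_red_char_trees} (applied with $\bar c = \bar C = \varepsilon$, so that $\redchar{q}{\ms A}{\varepsilon}{\varepsilon} = \rc{q}{\ms A}$ and likewise for $\ms B$), this is equivalent to the verifier having a winning strategy in $\mcg(\fc{q}{\ms A}, \varphi, \varepsilon, \varepsilon)$, respectively $\mcg(\fc{q}{\ms B}, \varphi, \varepsilon, \varepsilon)$. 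By Lemma~\ref{lemma:games_full_char_trees} (with the empty assignment $\alpha$), these in turn are equivalent to the verifier winning $\mcg(\ms A, \varphi, \alpha)$, respectively $\mcg(\ms B, \varphi, \alpha)$, and finally by Lemma~\ref{lemma:ef_model_checking} to $\ms A \models \varphi$, respectively $\ms B \models \varphi$. Stringing these equivalences together yields $\ms A \models \varphi \Longleftrightarrow \ms B \models \varphi$ for every such $\varphi$, i.e., $\ms A \msoeq{q} \ms B$.

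For the ``$\Rightarrow$'' direction, assume $\rc{q}{\ms A} \ne \rc{q}{\ms B}$; I want to exhibit a sentence of quantifier rank at most $q$ separating $\ms A$ and $\ms B$. The natural route is the standard Ehrenfeucht--Fra\"iss\'e / Hintikka-formula argument: from a finite rooted labeled tree of depth $q$ such as $\rc{q}{\ms A}$ one can read off a characteristic $\mso$ sentence $\varphi_{\ms A}^q$ of quantifier rank $q$ that ``describes'' the tree --- existential quantifiers at verifier nodes, universal at falsifier nodes, with the atomic type of the ordered induced structure at each leaf as the matrix --- such that $\ms A \models \varphi_{\ms A}^q$ always holds, and $\ms B \models \varphi_{\ms A}^q$ iff the verifier wins $\mcg(\rc{q}{\ms B}, \varphi_{\ms A}^q, \varepsilon, \varepsilon)$, which forces $\rc{q}{\ms B}$ to "contain" $\rc{q}{\ms A}$ in the appropriate branch-wise sense; running the symmetric argument with $\varphi_{\ms B}^q$ and using that both trees are finite and reduced (no two sibling subtrees are identical) shows that if $\ms A \msoeq{q} \ms B$ then $\rc{q}{\ms A} = \rc{q}{\ms B}$, contradiction. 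Since $\rc{q}{\ms A}$ is built from only finitely many atomic types over $\tau$ and the construction is uniform, $\varphi_{\ms A}^q$ is a genuine finite formula.

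The main obstacle is the ``$\Rightarrow$'' direction: the three preceding lemmas only tell us that the reduced tree computes satisfaction correctly, not that distinct reduced trees are separated by some bounded-rank sentence, so one must do the Hintikka-formula construction (or an equivalent back-and-forth argument) and, crucially, verify that collapsing a full characteristic tree to a reduced one does not collapse two inequivalent structures --- i.e., that the map $\ms A \mapsto \rc{q}{\ms A}$ is injective on $\msoeq{q}$-classes. This hinges on the earlier observation that the ordered induced structures $\ordind(\ms A, \bar c, \bar C)$ retain exactly the information (order, multiplicity, atomic type, membership in the chosen sets) that an atomic formula at a leaf can test, so that two branches are identified in the reduced tree precisely when no quantifier-free formula can tell their endpoints apart; everything else is bookkeeping over the recursive definition of $\redchar{q}{\cdot}{\cdot}{\cdot}$.
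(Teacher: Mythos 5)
Your ``$\Leftarrow$'' direction is exactly the paper's argument: the paper states this corollary as an immediate consequence of Lemmas~\ref{lemma:ef_model_checking}, \ref{lemma:games_full_char_trees} and~\ref{lemma:games_red_char_trees} and gives no further proof, and chaining those lemmas with $\bar c = \bar C = \varepsilon$ is all there is to it. You are also right that this chain only yields ``equal reduced trees implies $\ms A \msoeq{q} \ms B$'', and that the converse needs a separate Hintikka/\EF-style argument which the paper leaves implicit; supplying it is a genuine addition, and the idea of reading a rank-$q$ characteristic sentence off the reduced tree is the correct one.

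However, the sketch you give for that converse has two concrete problems. First, the nodes of $\rc{q}{\ms A}$ are not ``verifier nodes'' or ``falsifier nodes'': the tree is formula-agnostic and every node's children enumerate \emph{all} point and set moves, so you cannot attach a single quantifier type per node. The standard construction attaches to each node $\redchar{q}{\ms A}{\bar c}{\bar C}$ a formula $\varphi^{q}_{\ms A,\bar c,\bar C}(x_1,\ldots,x_m,X_1,\ldots,X_p)$ that conjoins (i) the atomic type of $\ordind(\ms A,\bar c,\bar C)$, (ii) for \emph{each} child subtree an existential statement ($\exists x_{m+1}$ or $\exists X_{p+1}$) asserting the corresponding child formula, and (iii) universal statements asserting that \emph{every} element and every set satisfies one of the child formulas; the conjunctions and disjunctions are finite by Lemma~\ref{lemma:red_char_size}, the quantifier rank is $q-m-p$, and the formula depends only on the tree. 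Second, and more seriously, your concluding step ``mutual branch-wise containment of the two trees, hence equality'' does not follow as stated: a containment relation defined by ``each child subtree of one embeds into some child subtree of the other'' is not obviously antisymmetric (a node may have two distinct children, one containing the other, matched against a single child on the other side), and you neither define the relation nor prove this. With the full Hintikka formula above one avoids containment altogether: prove by induction on $q-m-p$ that $\ms B \models \varphi^{q}_{\ms A,\bar c,\bar C}[\bar b,\bar B]$ iff $\redchar{q}{\ms B}{\bar b}{\bar B} = \redchar{q}{\ms A}{\bar c}{\bar C}$; since $\ms A \models \varphi^{q}_{\ms A,\varepsilon,\varepsilon}$ and this sentence has quantifier rank $q$, $\ms A \msoeq{q} \ms B$ immediately forces $\rc{q}{\ms B} = \rc{q}{\ms A}$.
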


The next lemma shows that reduced characteristic trees 
have small size. For~$i \in \mathbf{N}$,  we
define~$\exp^{(i)}(\cdot)$ as: $\exp^{(0)}(x) = x$, $\exp^{(1)}(x) = 2^x$ 
and~$\exp^{(i)}(x) = 2^{2 \exp^{(i-1)}(x)}$ for~$i \geq 2$.
\begin{lemma}\label{lemma:red_char_size}
Let~$\ms{A}$ be a $\tau$-structure with universe~$A$
such that each relation symbol in~$\tau$ has arity at most~$r$, 
and~$q \in \mathbf{N}$.
Then the number of reduced characteristic trees~$\redchar{q}{\ms{A}}{\bar{c}}{\bar{C}}$ 
for all possible choices of~$\bar{c},\bar{C}$ 
is at most $\exp^{(q+1)}(|\tau| \cdot q^r + q \log q + q^2)$. The size of 
a reduced characteristic tree~$\redchar{q}{\ms{A}}{\bar{c}}{\bar{C}}$ 
is at most $(\exp^{(q)}(|\tau| \cdot q^r + q \log q + q^2))^4$. 
\end{lemma}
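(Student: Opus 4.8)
The plan is to bound the two quantities by a simultaneous induction on $q$, tracking the number of \emph{distinct} reduced characteristic trees $\redchar{q}{\ms A}{\bar c}{\bar C}$ that can arise as we descend through a depth-$q$ game. The key observation is that a tree $\redchar{q}{\ms A}{\bar c}{\bar C}$ is determined entirely by (i) its root $\ordind(\ms A, \bar c, \bar C)$, which is an abstract object depending only on the isomorphism type of the ordered induced structure together with the traces of the chosen sets, and (ii) the \emph{set} of subtrees hanging off the root. So I would first count how many possible roots there are when $m + p \le q$: an ordered induced structure on at most $q$ named elements with at most $q$ named set-traces is specified by a $\tau$-structure on a universe of size $\le q$ (there are at most $\exp^{(1)}(|\tau|\cdot q^r)$ of these, since each $p'$-ary relation with $p' \le r$ picks a subset of a set of size $\le q^r$), a surjection recording which of the $\le q$ chosen elements equals which universe point (at most $q^q = \exp^{(1)}(q\log q)$ choices), and for each of the $\le q$ chosen sets a subset of the universe (at most $(2^q)^q = \exp^{(1)}(q^2)$ choices). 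Multiplying, the number of possible roots is at most $\exp^{(1)}(|\tau|\cdot q^r + q\log q + q^2)$; call this bound $N_0$.

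Next I would set up the recurrence. Let $A_j$ denote the number of distinct trees $\redchar{q}{\ms A}{\bar c}{\bar C}$ over all $\bar c, \bar C$ with $m + p = q - j$ (so $j$ is the remaining depth, $0 \le j \le q$). For $j = 0$ the tree is just its root, so $A_0 \le N_0$. For $j \ge 1$, such a tree is determined by its root (at most $N_0$ choices) together with the set of its subtrees, each of which is a tree counted by $A_{j-1}$; hence the number of possible subtree-sets is at most $2^{A_{j-1}}$, giving $A_j \le N_0 \cdot 2^{A_{j-1}}$. Since $N_0 \le \exp^{(1)}(K)$ where $K = |\tau|\cdot q^r + q\log q + q^2$, a routine unwinding of this recurrence shows $A_j \le \exp^{(j+1)}(K)$ for all $j$; in particular the total number of trees is $A_q \le \exp^{(q+1)}(K)$, which is the first claim. (One must be slightly careful that the $\exp^{(i)}$ tower as defined — with the factor $2$ in the exponent for $i \ge 2$ — absorbs the leading $N_0$ factor; that is exactly what the $2\exp^{(i-1)}$ in the definition buys us.)

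For the size bound on a single tree $\redchar{q}{\ms A}{\bar c}{\bar C}$ with $m + p = q - j$ remaining depth $j$, let $S_j$ be the maximum number of nodes. Then $S_0 = 1$ and $S_j \le 1 + (\text{number of distinct subtrees}) \cdot S_{j-1} \le 1 + A_{j-1}\cdot S_{j-1} \le 1 + \exp^{(j)}(K)\cdot S_{j-1}$, using that collapsed (equal) subtrees are shared so the branching at a node is at most the number of \emph{distinct} trees one level down. Unwinding, $S_j$ is at most a product of $j$ terms each bounded by $\exp^{(j)}(K)$, hence $S_q \le (\exp^{(q)}(K))^q \le (\exp^{(q)}(K))^4$ — here one checks the degenerate small-$q$ cases ($q \le 2$) by hand and for $q \ge 3$ notes that the product of $q$ factors of $\exp^{(q-1)}(K)$-or-less is swallowed by squaring the top exponent, since $q \cdot \exp^{(q-1)}(K) \le 2\exp^{(q-1)}(K)$ and $\exp^{(q)}(K)^2 = 2^{4\exp^{(q-1)}(K)} = \exp^{(q)}(K)\cdot \exp^{(q)}(K)$. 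The exponent $4$ is a deliberately loose constant.

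The main obstacle, and the only place requiring real care, is getting the tower arithmetic to line up with the specific definition of $\exp^{(i)}$ given just before the lemma (the asymmetric $2^{2\exp^{(i-1)}(x)}$ rather than $2^{\exp^{(i-1)}(x)}$): one needs the slack in that extra factor of $2$ to absorb both the leading $N_0$ in the count recurrence and the polynomial-in-$q$ branching in the size recurrence. I would handle this by proving the clean auxiliary inequality $\exp^{(1)}(K)\cdot 2^{\exp^{(i)}(K)} \le \exp^{(i+1)}(K)$ for $i \ge 1$ and $q^c \cdot \exp^{(i)}(K) \le 2\exp^{(i)}(K) \le \exp^{(i+1)}(K)$ for the relevant small constants $c$, and then the two inductions go through mechanically. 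Everything else — the enumeration of roots and the two one-line recurrences — is bookkeeping.
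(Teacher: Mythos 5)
Your strategy is essentially the paper's own: you count the possible root labels exactly as the paper does (at most $2^{|\tau|\cdot q^r}\cdot q^q\cdot 2^{q^2}$ choices of structure, element-ordering and set traces), you bound the number of distinct trees at remaining depth $j$ by ``a tree is a root plus a \emph{set} of trees one level down,'' and you bound the size by $1+(\text{number of distinct subtrees})\cdot(\text{maximum subtree size})$. The paper runs the same two recurrences, merely keeping point and set moves separate ($N(\ms A,m,p)\le 2^{N(\ms A,m+1,p)+N(\ms A,m,p+1)}$) and not charging separately for the root; your extra factor $N_0$ is harmless and, as you say, absorbed by the factor $2$ in the definition of $\exp^{(i)}$. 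So the first claim (the $\exp^{(q+1)}$ count) is fine.

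The size bound, however, is broken as written, and precisely in the step you flag as the delicate one. The chain $S_q\le(\exp^{(q)}(K))^q\le(\exp^{(q)}(K))^4$ is reversed for $q>4$, and the repair you sketch rests on $q\cdot\exp^{(q-1)}(K)\le 2\exp^{(q-1)}(K)$ (and on the auxiliary claim $q^c\cdot\exp^{(i)}(K)\le 2\exp^{(i)}(K)$), which is simply false once $q\ge 3$, since $qx\le 2x$ forces $q\le 2$; note also that the unwound product contains $\exp^{(q)}(K)$ itself as its largest factor, not $\exp^{(q-1)}(K)$. The inequality you actually need sits one level lower in the tower: for $q\ge 3$, $(\exp^{(q-1)}(K))^{q}=2^{2q\exp^{(q-2)}(K)}\le 2^{2\exp^{(q-1)}(K)}=\exp^{(q)}(K)$, because $q\exp^{(q-2)}(K)\le\exp^{(q-1)}(K)$ (using $K\ge q$). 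With that corrected bookkeeping your unwinding does work: $S_q\le(q+1)\prod_{l=1}^{q}\exp^{(l)}(K)\le(q+1)\exp^{(q)}(K)\cdot(\exp^{(q-1)}(K))^{q-1}\le(\exp^{(q)}(K))^2\le(\exp^{(q)}(K))^4$, with $q\le 2$ checked by hand. The paper avoids the global product altogether by carrying the fourth power as the induction invariant level by level, i.e.\ a tree at remaining depth $i$ has size at most $(\exp^{(i)}(K))^4$, so that the inductive step only needs $1+2\exp^{(i)}(K)\cdot(\exp^{(i-1)}(K))^4\le(\exp^{(i)}(K))^4$; you may find that formulation cleaner, since it localizes the tower arithmetic to a single one-level comparison instead of a product over all levels.
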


\begin{proof}
For integers~$m,p$ let $N(\ms{A},m,p)$ be the number of
trees $\redchar{q}{\ms{A}}{\bar{c}}{\bar{C}}$, where 
$\bar{c} = c_1, \ldots, c_m \in A^m$ and $\bar{C} = C_1, \ldots, C_p$ with
$C_i \subseteq A$, $1 \le i \le p$.
Define
\[S(\ms{A},m,p) = \max_{\bar{c},\bar{C}} |\redchar{q}{\ms{A}}{\bar{c}}{\bar{C}}|,\]
where the maximum is taken over all strings~$\bar{c}$ and~$\bar{C}$ such 
that~$|\bar{c}| = m$ and~$|\bar{C}| =p$. Also define
$f(\tau,q) = |\tau| \cdot q^r + q \log q + q^2$.

If~$m+p = q$ then $\redchar{q}{\ms{A}}{\bar{c}}{\bar{C}}$
has one node for all~$\bar{c}, \bar{C}$ 
and $S(\ms{A},m,p) =1$. 
The number of distinct trees $N(\ms{A}, m,p)$, however, depends on
the number of structures on a universe of size at most~$m \leq q$
over a vocabulary with~$|\tau|$ relation symbols 
each of arity at most~$r$.  The number of such structures is at most~$2^{|\tau| \cdot q^r}$,
and since there are at most~$q^q \cdot 2^{q^2}$ vectors~$\bar{c},\bar{C}$ over 
the $m + p \le q$ elements, we have that~$N(\ms{A},m,p) \leq 2^{f(\tau,q)} 
\le \exp^{(1)}(f(\tau,q))$.
If~$m+p < q$ then the root of $\redchar{q}{\ms{A}}{\bar{c}}{\bar{C}}$ can have
as children any of the $N(\ms{A}, m+1,p)$ reduced characteristic trees 
corresponding to point moves and $N(\ms{A},m,p+1)$ trees corresponding to set moves. 
Hence $N(\ms{A},m,p) \leq 2^{N(m+1,p) + N(m,p+1)}$. By induction hypothesis, each of $N(\ms{A},m+1,p)$
and $N(\ms{A}, m,p+1)$ is at most $\exp^{(q - (m+p))}(f(\tau,q))$
and hence 
\[
N(\ms{A},m,p) \leq 2^{2 \cdot \exp^{(q - (m+p))}(f(\tau,q))} = 
\exp^{(q - (m+p) + 1)}(f(\tau,q)).
\]
Hence $N(\ms{A},0,0) \leq \exp^{(q+1)}(f(\tau,q))$ as claimed. 

The size of a reduced characteristic tree is one if~$m+p = q$. Otherwise
\begin{multline*}
S(\ms{A},m,p) \leq 1 + S(\ms{A},m+1,p)N(\ms{A},m+1,p) + {} \\
              S(\ms{A},m,p+1)N(\ms{A},m,p+1),
\end{multline*}
since any such tree consists of a single root vertex and at most $N(\ms{A},m+1,p)$ 
trees (corresponding to point moves) each of size $S(\ms{A},m+1,p)$ and at most 
$N(\ms{A},m,p+1)$ trees (corresponding to set moves) of size $N(\ms{A},m,p+1)$. 
By induction hypothesis, each of the terms $S(\ms{A},m+1,p)$ and $S(\ms{A},m,p+1)$
is at most $(\exp^{(q - (m+p+1))}(f(\tau,q)))^4$ and hence
\[
S(\ms{A},m,p) \leq  1 + 2 \exp^{(q-(m+p))}(f(\tau,q)) \cdot (\exp^{(q - (m+p+1))}(f(\tau,q)))^4.
\]
One can show that the right hand side of the above inequality 
is at most $(\exp^{(q - (m+p))}(f(\tau,q)))^4$, 
thereby proving the claimed size bound. 
\qed
\end{proof}

\section{Constructing Characteristic Trees}
\label{sec:constructing_char_trees}

In this section, we show how to construct reduced characteristic trees of depth~$q$
for a graph~$G$ of rankwidth~$t$ when given a $t$-labeled parse tree decomposition 
of~$G$. A $t$-labeled graph may be represented as $\tau$-structure 
where~$\tau = \{E, L_1, \ldots, L_t\}$. The symbol~$E$ is a  binary relation symbol
representing the edge relation and~$L_i$ for~$1 \le i \le t$ is a unary relation symbol
representing the set of vertices with label~$i$. In what follows, whenever we talk about 
a $\tau$-structure $\ms{A}$, we mean a graph viewed as a structure over the vocabulary 
$\{E, L_1, \ldots, L_t\}$.

\begin{lemma}\label{lemma:construction_constant_size}
Let $\ms A$ be a $\tau$-structure with $|A| = 1$. Let $q \ge 0$ and $\bar c \in A^m$ and $\bar C = C_1,\ldots,C_p$ with
$C_i \subseteq A$, $1 \le i \le p$.
Then $\redchar{q}{\ms A}{\bar c}{\bar C}$ can be constructed in constant time for each fixed~$q$.
\end{lemma}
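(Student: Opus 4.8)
The plan is to observe that when $|A| = 1$ there is essentially nothing to choose, so the recursion defining $\redchar{q}{\ms A}{\bar c}{\bar C}$ branches only in a bounded way and bottoms out after at most $q$ levels. Write $A = \{a\}$. First I would note that the only element that can ever be appended to $\bar c$ is $a$ itself, and the only subsets of $A$ that can be appended to $\bar C$ are $\emptyset$ and $A$. Hence, from the definition of reduced characteristic trees, the root of $\redchar{q}{\ms A}{\bar c}{\bar C}$ (when $m + p + 1 \le q$) has at most three distinct children up to equality: $\redchar{q}{\ms A}{\bar c a}{\bar C}$, $\redchar{q}{\ms A}{\bar c}{\bar C \emptyset}$, and $\redchar{q}{\ms A}{\bar c}{\bar C A}$. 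Moreover $\ordind(\ms A, \bar c, \bar C)$ is trivially computable in constant time: its universe is a single equivalence class $[1]_{\bar c}$ (or the empty structure if $m = 0$), the relations are determined by checking the fixed finite structure $\ms A$, and $h(\bar C \cap \bar c)$ is a length-$p$ sequence each entry of which is either $\emptyset$ or the whole one-element universe.

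Next I would make the recursion finite by bounding the depth. Since each recursive call increases $m + p$ by exactly one, and the recursion stops as soon as $m + p + 1 > q$, every branch has length at most $q - (m+p) \le q$. Combined with the branching bound of $3$ per node, the tree $\redchar{q}{\ms A}{\bar c}{\bar C}$ as an abstract object has at most $3^q$ nodes, and — crucially — this bound depends only on $q$, not on $m$, $p$, or $\bar c$, $\bar C$. So the natural argument is an induction on $q - m - p$: for the base case $m + p + 1 > q$ the tree is a single node $\ordind(\ms A, \bar c, \bar C)$, computable in constant time; for the inductive step, compute the (at most three) subtrees $\redchar{q}{\ms A}{\bar c a}{\bar C}$, $\redchar{q}{\ms A}{\bar c}{\bar C \emptyset}$, $\redchar{q}{\ms A}{\bar c}{\bar C A}$ recursively — each in constant time by the induction hypothesis — discard duplicates, and attach them to the freshly computed root. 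A total of $O(3^q) = O(1)$ work for fixed $q$.

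The only mildly delicate point — and the one I'd be most careful about — is the bookkeeping in $\ordind(\ms A, \bar c, \bar C)$ when $m \ge 1$: one must verify that $\bar c[U]$, $\bar C \cap \bar c$, and the isomorphism $h$ are all presentable in constant space and time, i.e. that the representation of a reduced characteristic tree does not secretly grow with $m$. But since $|A| = 1$, the universe $[m]/{\eqv{\bar c}}$ collapses to at most one class, $\bar h = h(1), \ldots, h(m)$ is the constant sequence $[1]_{\bar c}, \ldots, [1]_{\bar c}$, and each coordinate of $h(\bar C \cap \bar c)$ is one of the two subsets of the one-point universe; a moment's thought shows all of this is encodable in $O(1)$ space once $q$ (hence the bound $m, p \le q$) is fixed, so no term in the construction blows up. With that observation the induction goes through and the lemma follows.
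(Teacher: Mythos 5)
Your proposal is correct and takes essentially the same route as the paper: the paper's (one-line) proof likewise rests on the observation that with a one-element universe there are only $2^1+1=3$ possible moves per level and depth at most $q$, giving an $O(3^q)$ bound on the (full, hence reduced) characteristic tree and thus constant-time constructibility for fixed $q$. Your version merely spells out the recursion on the reduced tree and the constant-size bookkeeping explicitly, which is a harmless elaboration of the same idea.
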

\begin{proof}
Note that, in this case, $\fullchar{q}{\ms A}{\bar c}{\bar C}$ has size at most~$O((2^1 + 1)^q) = O(3^q)$. 
Hence for each fixed~$q$, $\redchar{q}{\ms A}{\bar c}{\bar C}$ can be constructed in constant time.
\qed
\end{proof}

In what follows, we let $\ms A_1, \ms A_2$ and $\ms A = \ms A_1 \otimes \ms A_2$ be
$\tau$-structures, where $\otimes = {\myotimes{g}{f_1}{f_2}}$ for $t$-relabelings~$g$, $f_1$,
and~$f_2$. Recall that if $\ms A = \ms A_1 \otimes \ms A_2$, then we assume that 
$A_1$ and~$A_2$ (the universes of $\ms{A}_1$ and~$\ms{A}_2$, respectively) are disjoint. 
Furthermore for a fixed constant~$q \ge 0$, let~$m$ and~$p$ be nonnegative integers 
such that~$m + p \le q$, $\bar c = c_1,\ldots,c_m \in (A_1 \cup A_2)^m$
and $\bar C = C_1, \ldots, C_p$, where $C_j \subseteq A_1 \cup A_2$, $1 \le j \le p$.
For $i \in \{1,2\}$, we let $\bar c_i = c_{i,1},\ldots,c_{i,m_i} = \bar c[A_i]$.

In the remainder of this section, we show how to construct
$\redchar{q}{\ms A}{\bar c}{\bar C}$ given
$\redchar{q}{\ms A_1}{\bar c_1}{\bar C \cap \bar c_1}$ and
$\redchar{q}{\ms A_2}{\bar c_2}{\bar C \cap \bar c_2}$.
For the construction, we need to know the order in which 
the elements in $\bar c_1$ and $\bar c_2$
appear in $\bar c$.  This motivates us to define
the notion of an \emph{indicator vector} $\indicator(A_1, A_2, \bar c)$.

\begin{definition}
\rm
The \bem{indicator vector} of~$\bar c = c_1, \ldots, c_m$, denoted 
$\indicator(A_1, A_2, \bar c)$, is the vector $\bar d = d_1,\ldots,d_m$, such that
for $i \in \{1,2\}$ and all $1 \le j \le m$ it holds that $d_j = (i,k)$ iff  $c_j = c_{i,k}$.
That is, $d_j = (i, k)$ iff $c_j$ is the $k$th element in the vector~$\bar c_i = \bar c[A_i]$.
If $\bar d = d_1,\ldots,d_m$ and~$(i,k) \in \{1,2\} \times [m+1]$, then we use~$\bar{d}(i,k)$
to denote the vector~$d_1, \ldots, d_{m+1}$, where~$d_{m+1} = (i,k)$. 
\end{definition}

\begin{example}
\label{ex:indicator}
Let $A_1 = \{a_1,a_2\}$, $A_2 = \{b_1,b_2,b_3,b_4\}$ and let $\bar{c}$ be the 
string $a_1 b_1 b_2 a_2 b_3 b_4 a_2 b_3 a_1$.  Then we get:
{\small
$$
\begin{array}{cccccccccccc}
\bar c 			\: & = & \:   a_1 \: & b_1 \: & b_2 \: & a_2 \: & b_3 \: & b_4 \: & a_2 \: & b_3 \: & a_1 \\
\bar c[A_1] 		\: & = & \:   a_1 \: &     \: &     \: & a_2 \: &     \: &    \: & a_2 \: &     \: & a_1 \\
\bar c[A_2] 		\: & = & \:       \: & b_1 \: & b_2  \:&     \: & b_3 \: & b_4 \: &     \: & b_3 \: &     \\
\indicator(A_1,A_2,\bar c) \: & = & \: (1,1) \: & (2,1) \: & (2,2) \: & (1,2) \: & (2,3) \: & (2,4) \: & (1,3) \: & 
                                       (2,5) \: & (1,4)
\end{array}
$$
}
Given $\bar c[A_1]$, $\bar c[A_2]$, and $\bar d = d_1,\ldots,d_m = \indicator(A_1,A_2,\bar c)$, one can now reconstruct
$\bar c$.  For example, $c_8 = b_3$, since $d_8 = (2,5)$, which tells us that $c_8$ is the fifth element in $\bar c_2$.
\end{example}

Constructing $R = \redchar{q}{\ms A}{\bar c}{\bar C}$ when given
$R_1 = \redchar{q}{\ms A_1}{\bar c_1}{\bar C \cap \bar c_1}$,
$R_2 = \redchar{q}{\ms A_2}{\bar c_2}{\bar C \cap \bar c_2}$, and $\bar d = \indicator(A_1, A_2, \bar c)$ 
consists of the following two steps:
\begin{enumerate}
\item construct the label for $\root(R) = \ordind(\ms A, \bar c, \bar C)$, and then
\item recursively construct its subtrees.
\end{enumerate}
Since $\ordind(\ms A, \bar c) \cong \ms A[\bar c]$ and $\ms A_i[\bar c_i]
\cong \ordind(\ms A_i, \bar c_i)$, one easily sees that $$\ordind(\ms
A, \bar c) \cong \ordind(\ms A_1, \bar c_1) \otimes \ordind(\ms A_2,
\bar c_2).$$  For the first step, we therefore just need to rename elements in $\ordind(\ms A_1,
\bar c_1) \otimes \ordind(\ms A_2, \bar c_2)$ in an appropriate way.
The information on how elements are to be renamed is stored in
the indicator vector~$\bar d$ of $\bar c$.  See Figure~\ref{fig:moreotimes}
for an example.
The formal definition of the renaming operator $\moreotimes{\bar d}$
and Lemma~\ref{lem:moreotimes} are technical
and may be skipped if the reader believes that one can construct $\ordind(\ms A, \bar c)$ 
from $\ordind(\ms A_1, \bar c_1)$ and $\ordind(\ms A_2,\bar c_2)$ using~$\bar d$.

\begin{definition}
\label{def:moreotimes}
\rm
For $i \in \{1,2\}$, let $\ordind(A_i, \bar c_i, \bar C \cap A_i) = (\ms H_i, \bar c_i', \bar C_i')$.
Define a map $f\colon [m] \to H_1 \uplus H_2$ as follows: for all $1 \le j \le m$, 
let $f(j) = [k]_{\bar c_i}$ iff $d_j = (i, k)$.
Then we define
$\ordind(\ms A_1, \bar c[A_1], \bar C \cap A_1)
\moreotimes{\bar d} 
\ordind(\ms A_2, \bar c[A_2], \bar C \cap A_2)
$
as
$$\ordind(\ms H_1 \otimes \ms H_2,  f(1)\ldots f(m), \bar C_1' \cup \bar C_2').$$
\end{definition}

\begin{figure}[tb]
\centerline{\includegraphics{rankwidth.4}}
\caption{$\ms G_1$ and $\ms G_1$ depicted on the top left are graphs such that $\ms G_1 \oplus \ms G_2$ is the graph of Figure~\ref{fig:ordind};
the grey edges being those created by the $t$-labeled composition operator~$\oplus$.
For $\bar c = a_5a_2a_3a_3a_5$ and $\bar c_1 = \bar c[G_1]$, $\bar c_2 = \bar c[G_2]$ 
the ordered induced substructures $\ms H_1 = \ordind(\ms G_1, \bar c_1)$ and $\ms H_2 = \ordind(\ms G_1, \bar c_2)$ depicted in black on the bottom left.
On these, we can take the $t$-labeled composition $\ms H = \ms H_1 \oplus \ms H_2$ and obtain the graph isomorphic to $\ms G_1[\bar c_1] \oplus \ms G_2[\bar c_2]$
on the bottom right.  We can now use the vector~$\bar d = (1,1)(2,1)(2,2)(2,3)(1,2)$ to rename vertices
in $\ms H$ and obtain $\ordind(\ms G, \bar c)$ depicted on the top right.
Note that $\bar c$ and $\bar d$ essentially describe the same vertices.
\label{fig:moreotimes}}
\end{figure}

\begin{lemma}
\label{lem:moreotimes}
Let~$\ms A_1$ and $\ms A_2$ be $\tau$-structures and 
let $\otimes = \myotimes{g}{f_1}{f_2}$ for 
 some $t$-relabelings $g, f_1, f_2$. Let $\bar c = c_1,\ldots,c_m \in (A_1 \cup A_2)^m$
and $\bar C = C_1, \ldots, C_p$, where $C_j \subseteq A_1 \cup A_2$ for 
$1 \le j \le p$. Also let $\bar{d} = \indicator(A_1,A_2,\bar{c})$. Then
$$
\ordind(\ms A_1 \otimes \ms A_2, \bar c, \bar C) =
	\ordind(\ms A_1, \bar c[A_1], \bar C \cap A_1)
	\moreotimes{\bar d} 
	\ordind(\ms A_2, \bar c[A_2], \bar C \cap A_2).
$$
\end{lemma}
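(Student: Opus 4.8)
The plan is to unfold both sides down to the level of ordered induced structures and then identify them through a single composite isomorphism. Write $\ms A = \ms A_1 \otimes \ms A_2$ and $\ms H_i = \ordind(\ms A_i, \bar c_i)$, and treat $H_1, H_2$ as disjoint (as is implicit in forming their composition). By Definition~\ref{def:moreotimes}, the right-hand side of the claimed equality is $\ordind(\ms K, \bar e, \bar D)$, where $\ms K = \ms H_1 \otimes \ms H_2$, $\bar e = f(1)\ldots f(m)$, and $\bar D = \bar C_1' \cup \bar C_2'$ with $(\bar C_i')_j = h_i(C_j \cap \bar c_i)$; so the goal is $\ordind(\ms A, \bar c, \bar C) = \ordind(\ms K, \bar e, \bar D)$.

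First I would record two elementary facts about $\myotimes{g}{f_1}{f_2}$, both obtained by unfolding its definition. (i) It commutes with taking induced substructures: $\ms A[\bar c] = \ms A_1[\bar c_1] \otimes \ms A_2[\bar c_2]$ as $\tau$-structures. Indeed, on each side edges and labels are those of $\ms A_i$ transformed by $f_i$, and a cross edge between $u\in A_1$ and $v\in A_2$ is decided only from $\Lab_1(u)$, $\Lab_2(v)$ and $g$; none of this depends on vertices outside $\{c_1,\ldots,c_m\}$, and the two universes agree because $\bar c_i = \bar c[A_i]$ and $A_1\cap A_2=\emptyset$. (ii) It respects isomorphisms of its arguments: if $\psi_1,\psi_2$ are isomorphisms of $t$-labeled structures, then $\psi_1\uplus\psi_2$ is an isomorphism between the two composed structures, since isomorphisms preserve labels and the edge relation, which is all the cross-edge condition sees. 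Let $h_i\colon \ms A_i[\bar c_i]\to\ms H_i$, $h_i(c_{i,k})=[k]_{\bar c_i}$, be the defining isomorphisms of the ordered induced structures. Then (i) and (ii) give an isomorphism $\phi:=h_1\uplus h_2\colon \ms A[\bar c]\to\ms K$, and the key bookkeeping observation is $\phi(c_j)=f(j)$ for all $j$: if $c_j\in A_i$ and $d_j=(i,k)$ then $c_j=c_{i,k}$, hence $\phi(c_j)=h_i(c_{i,k})=[k]_{\bar c_i}$, which is precisely $f(j)$ by Definition~\ref{def:moreotimes}.

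From $\phi(c_j)=f(j)$ I would extract two consequences. Since $f$ is onto $H_1\uplus H_2$ --- each class $[k]_{\bar c_i}$ is hit at the position of the $k$-th occurrence of that element in $\bar c[A_i]$ --- we get $\{e_1,\ldots,e_m\}=K$, so $\ms K[\bar e]=\ms K$. And since $\phi$ is injective, $e_i=e_j\Leftrightarrow c_i=c_j$, i.e.\ $\eqv{\bar e}=\eqv{\bar c}$ on $[m]$; this identifies $[m]/{\eqv{\bar e}}=[m]/{\eqv{\bar c}}$ and makes $[j]_{\bar e}=[j]_{\bar c}$. Now let $\ell\colon\ms K[\bar e]=\ms K\to\ordind(\ms K,\bar e)$, $\ell(e_j)=[j]_{\bar e}=[j]_{\bar c}$, be the defining isomorphism. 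Then $\ell\circ\phi\colon\ms A[\bar c]\to\ordind(\ms K,\bar e)$ and the defining isomorphism $h\colon\ms A[\bar c]\to\ordind(\ms A,\bar c)$, $h(c_j)=[j]_{\bar c}$, have the same underlying bijection $c_j\mapsto[j]_{\bar c}$; since an isomorphism together with its source determines its target, $\ordind(\ms K,\bar e)=\ordind(\ms A,\bar c)$, so the first coordinates of the two sides agree. The second coordinates agree as well, both being the sequence $[1]_{\bar c},\ldots,[m]_{\bar c}$.

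For the third coordinates I would show, componentwise, $h(C_j\cap\bar c)=\ell(D_j\cap\bar e)$. As $D_j\subseteq K=\{e_1,\ldots,e_m\}$ we have $D_j\cap\bar e=D_j$, so it suffices to prove the equivalence $c_i\in C_j\Leftrightarrow f(i)\in D_j$ for all $i,j$; equality of the two sets of $\eqv{\bar c}$-classes then follows. Say $c_i\in A_1$ with $d_i=(1,k)$, so $f(i)=[k]_{\bar c_1}=h_1(c_i)$ (the case $c_i\in A_2$ is symmetric). If $c_i\in C_j$ then $c_i\in C_j\cap\bar c_1$, hence $f(i)=h_1(c_i)\in h_1(C_j\cap\bar c_1)\subseteq D_j$; conversely, if $f(i)\in D_j$ then $f(i)\in D_j\cap H_1=h_1(C_j\cap\bar c_1)$ (using $H_1\cap H_2=\emptyset$), so injectivity of $h_1$ and $f(i)=h_1(c_i)$ give $c_i\in C_j\cap\bar c_1\subseteq C_j$. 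Combining the three coordinates yields the lemma. There is no deep obstacle here: the care that is needed lies in verifying (i) and (ii) cleanly from the definition of $\myotimes{g}{f_1}{f_2}$ and in composing the maps $h,h_1,h_2,\phi,\ell$ consistently while keeping track of which equivalence relation on $[m]$ is in force.
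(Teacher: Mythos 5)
Your proof is correct and takes essentially the same route as the paper's: you verify that the composition commutes with induced substructures and respects isomorphisms, so that $\phi = h_1 \uplus h_2$ agrees with $f$ on $\bar c$, and conclude from $f(j_1)=f(j_2) \Leftrightarrow c_{j_1}=c_{j_2}$ that the two ordered induced structures coincide. The only difference is that you spell out in full the agreement of the third coordinates (the sequences of sets), which the paper's proof leaves implicit in its concluding "which then implies lemma."
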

\begin{proof}
For $i \in \{1,2\}$, it holds
$$
\ordind(\ms A_i, \bar c_i, \bar C_i) = (\ms H_i, \bar c_i', \bar C_i') \cong (\ms A_i[\bar c[A_i]], \bar c[A_i], \bar C \cap A_i),
$$
where $h_i \colon c_{i,j} \mapsto [j]_{\bar c_i}$, $1 \le j \le m_i$ is the isomorphism
of Definition~\ref{def:ordind} and
$\bar c_i' = c_{i,1}', \ldots, c_{i,m_i}' = h_i(1),\ldots,h_i(m_i) \in H_i^{m_i}$.
Let $\ms H = \ms H_1 \otimes \ms H_2$ be the $\tau$-structure with universe
$H = H_1 \uplus H_2 = [m_1]/{\eqv{\bar c_1}} \uplus [m_2]/{\eqv{\bar c_2}}$, where we assume
without loss of generality that $H_1$ and $H_2$ are disjoint (rename elements otherwise).
We want to show the equality in the following diagram (see also Figure~\ref{fig:moreotimes} for a concrete example):
$$
\begin{array}{ccccccccccc}
\ms A_1[\bar c_1] & \otimes & \ms A_2[\bar c_2] & = & \ms A[\bar c] & \qquad \cong \qquad & \ordind(\ms A, c_1\ldots c_m) \\ 
\reflectbox{\rotatebox[origin=c]{90}{$\cong$}}
&&
\reflectbox{\rotatebox[origin=c]{90}{$\cong$}}
&&&&
\reflectbox{\rotatebox[origin=c]{90}{$=$}}
\\
\ms H_1 & \otimes & \ms H_2 & = & \ms H & \qquad \cong \qquad & \ordind(\ms H, f(1) \ldots, f(m)) 
\end{array}
$$

For all $1 \le j \le m$, it holds
$$
f(j) = \begin{cases}
	h_1(c_j) & \text{if } c_j \in A_1, \\
	h_2(c_j) & \text{if } c_j \in A_2,
\end{cases}
$$
where $f\colon [m] \to H_1 \uplus H_2$ is the map from Definition~\ref{def:moreotimes}.
If $c_j \in A_i$, then $c_j = c_{i,k}$ for some $1 \le k \le m_i$ and therefore $d_j = (i, k)$.  This implies
$h_i(c_j) = [k]_{\bar c_i} = f(j)$ by Definition~\ref{def:ordind} and Definition~\ref{def:treecross}.
Therefore, $f(j_1) = f(j_2)$ iff $c_{j_1} = c_{j_2}$, which then implies lemma.
\qed
\end{proof}

We now define the \emph{tree cross product} $R_1 \mycart{q}{\otimes}{\bar d} R_2$ of $R_1$ and~$R_2$
and then show that in fact $R = R_1 \mycart{q}{\otimes}{\bar d} R_2$.
As motivated before, the root of the tree cross product is simply $\root(R_1) \moreotimes{\bar d} \root(R_2)$.
For the construction of the subtrees, recall that each subtree of~$R$ corresponds to 
either a set move $U \subseteq A$ or a point move $a \in A$.
Here, $\{\, U \subseteq A\,\} = \{\,U_1 \uplus U_2 \mid U_1 \subseteq A_1, U_2 \subseteq A_2\,\}$ and
$A = A_1 \uplus A_2$.
We can therefore reconstruct the subtrees of $R$ by recursively combining each 
subtree for a set $U_1 \subseteq A_1$ with
a subtree for a set $U_2 \subseteq A_2$ (the set $S_2$ in the following definition), 
and by choosing subtrees of $R_1$ for point moves
in $A_1$, and choosing subtrees of $R_2$ for point moves in~$A_2$ 
(the set $S_1$ in the following definition).

\begin{definition}[Tree Cross Product]
\label{def:treecross}
\rm
Let~$\ms A_1$ and $\ms A_2$ be $\tau$-structures and 
let $\otimes = {\myotimes{g}{f_1}{f_2}}$ for 
 some $t$-relabelings $g, f_1, f_2$.
For a fixed constant~$q \ge 0$, let~$m$ and~$p$ be nonnegative integers 
such that~$m + p \le q$. Let $\bar c = c_1,\ldots,c_m \in (A_1 \cup A_2)^m$
and $\bar C = C_1, \ldots, C_p$, where $C_j \subseteq A_1 \cup A_2$, $1 \le j \le p$.
For $i \in \{1,2\}$, let $\bar c_i = c_{i,1},\ldots,c_{i,m_i} = \bar c[A_i]$, 
$q_i \ge q - m - p$, and $R_i = \redchar{q_i}{\ms A_i}{\bar c_i}{\bar C \cap A_i}$
with $\root(R_i) = (\ms H_i, \bar c_i', \bar C_i') = \ordind(A_i, \bar c_i, \bar C \cap A_i)$.
We define the \bem{tree cross product} of $R_1$ and $R_2$, 
$R = R_1 \mycart{q}{\otimes}{\bar d} R_2$, be a finite, rooted tree such that
\begin{itemize}
\item $\root(R) = \root(R_1) \moreotimes{\bar d} \root(R_2)$, and
\item if $m + p + 1 \le q$, then $\subtrees(R) = S_1 \cup S_2$, where
\begin{align*}
S_1 = {}
	& \big\{\, \subtree_{R_1}(u_1) \mycart{q}{\otimes}{\bar d(1,m_1+1)} R_2 \bigm|  \\
	& \qquad \qquad u_1 = (\ms H_1', \bar c_1' c, \bar C_1') \in \children_{R_1}(\root(R_1)) \,\big\} \cup {} \\
	& \big\{\, R_1 \mycart{q}{\otimes}{\bar d (2,m_2+1)} \subtree_{R_2}(u_2) \bigm|  \\
	& \qquad \qquad
	 	u_2 = (\ms H_2', \bar c_2' c, \bar C_2') \in \children_{R_2}(\root(R_2)) \,\big\}
\end{align*}
and
\begin{multline*}
S_2 = \big\{\, \subtree_{R_1}(u_1) \mycart{q}{\otimes}{\bar d} \subtree_{R_2}(u_2) \bigm| \\
	u_i = (\ms H_i', \bar c_i', \bar C_i' D_i) \in \children_{R_i}(\root(R_i)), 1 \le i \le 2 \,\big\}.
\end{multline*}
\end{itemize}
\end{definition}

\begin{lemma}\label{lem:tree_cross_product}
Let~$\ms A_1$ and $\ms A_2$ be $\tau$-structures and 
let $\otimes = \myotimes{g}{f_1}{f_2}$ for some
 $t$-relabelings $g, f_1, f_2$. For nonnegative integers~$q,m,p$
 with~$m + p \leq q$, let $\bar c = c_1, \ldots, c_m \in (A_1 \cup A_2)^m$
and $\bar C = C_1, \ldots, C_p$, where $C_j \subseteq A_1 \cup A_2$ for 
$1 \le j \le p$. Also let $\bar{d} = \indicator(A_1,A_2,\bar{c})$ 
and for~$1 \leq i \le 2$ let~$q_i \geq q - m - p$. Then
$$
\redchar{q}{\ms A_1 \otimes \ms A_2}{\bar c}{\bar C} =
\redchar{q_1}{\ms A_1}{\bar c_1}{\bar C \cap A_1}
\mycart{q}{\otimes}{\bar d}
\redchar{q_2}{\ms A_2}{\bar c_2}{\bar C \cap A_2}.
$$
\end{lemma}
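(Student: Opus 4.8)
The proof is a structural induction on $q - m - p$, mirroring the recursive definitions of $\redchar{q}{\cdot}{\cdot}{\cdot}$ and $\mycart{q}{\otimes}{\bar d}$ and using Lemma~\ref{lem:moreotimes} for the base case of each node. First I would verify that the two sides have the same root label: by definition $\root$ of the left-hand side is $\ordind(\ms A_1 \otimes \ms A_2, \bar c, \bar C)$, while $\root$ of the right-hand side is $\root(R_1) \moreotimes{\bar d} \root(R_2) = \ordind(\ms A_1, \bar c[A_1], \bar C \cap A_1) \moreotimes{\bar d} \ordind(\ms A_2, \bar c[A_2], \bar C \cap A_2)$, and these are equal by Lemma~\ref{lem:moreotimes}. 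If $m + p = q$ (equivalently $m+p+1 > q$, so no subtrees), this already finishes the case, and it also covers the situation where one of $\ms A_1, \ms A_2$ contributes all remaining moves.

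Next, assuming $m + p + 1 \le q$, I would show the two sides have the same \emph{set of subtrees}. Each subtree of $\redchar{q}{\ms A_1 \otimes \ms A_2}{\bar c}{\bar C}$ is either $\redchar{q}{\ms A_1 \otimes \ms A_2}{\bar c\, a}{\bar C}$ for a point move $a \in A_1 \uplus A_2$, or $\redchar{q}{\ms A_1 \otimes \ms A_2}{\bar c}{\bar C\, U}$ for a set move $U \subseteq A_1 \uplus A_2$. For a point move $a \in A_1$: then $\bar c\, a$ restricted to $A_1$ is $\bar c_1\, a$ and restricted to $A_2$ is $\bar c_2$, and the indicator vector of $\bar c\, a$ is exactly $\bar d(1, m_1 + 1)$; so by the induction hypothesis
\[
\redchar{q}{\ms A_1 \otimes \ms A_2}{\bar c\, a}{\bar C}
= \redchar{q_1'}{\ms A_1}{\bar c_1 a}{\bar C \cap A_1} \mycart{q}{\otimes}{\bar d(1,m_1+1)} \redchar{q_2}{\ms A_2}{\bar c_2}{\bar C \cap A_2},
\]
which is precisely a term of $S_1$, since $\redchar{q_1'}{\ms A_1}{\bar c_1 a}{\bar C \cap A_1} = \subtree_{R_1}(u_1)$ for the child $u_1$ of $\root(R_1)$ corresponding to the point move $a$ (here one uses $q_1 \ge q - m - p$, so $q_1 \ge q - (m{+}1){-}p + 1$ and the subtree is well-defined). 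The symmetric argument handles $a \in A_2$. For a set move $U = U_1 \uplus U_2$ with $U_i \subseteq A_i$: the indicator vector is unchanged (it's still $\bar d$), $(\bar C\, U) \cap A_i = (\bar C \cap A_i)\, U_i$, and the induction hypothesis gives
\[
\redchar{q}{\ms A_1 \otimes \ms A_2}{\bar c}{\bar C\, U}
= \redchar{q_1}{\ms A_1}{\bar c_1}{(\bar C \cap A_1)\, U_1} \mycart{q}{\otimes}{\bar d} \redchar{q_2}{\ms A_2}{\bar c_2}{(\bar C \cap A_2)\, U_2},
\]
which is a term of $S_2$. Conversely every element of $S_1 \cup S_2$ arises this way, so the subtree sets coincide. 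Since root labels and subtree sets agree, the two trees are equal.

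\textbf{Main obstacle.} The genuinely delicate point is the base case, i.e.\ getting the root labels to match: one must check that $\moreotimes{\bar d}$ really does reconstruct $\ordind(\ms A_1 \otimes \ms A_2, \bar c, \bar C)$ from the two ordered substructures, including that the join edges of $\ms H_1 \otimes \ms H_2$ correspond exactly to the edges of $\ms A[\bar c]$ — but this is exactly what Lemma~\ref{lem:moreotimes} asserts, so I would simply invoke it. The remaining care is purely bookkeeping: tracking how $\bar c[A_i]$, $\bar C \cap A_i$, and the indicator vector transform under a point move in $A_1$ versus $A_2$ versus a set move, and checking the parameter $q_i \ge q - m - p$ stays valid after decrementing $q - m - p$ (after a point move in $A_1$ we need $q_1' \ge q - (m{+}1) - p$ for the $A_1$-recursion and $q_2 \ge q - (m{+}1) - p$ for $A_2$, both implied by $q_i \ge q - m - p$). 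I would state these transformations once and then let the induction run.
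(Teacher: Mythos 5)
Your proposal is correct and follows essentially the same route as the paper's own proof: an induction on $q-m-p$, invoking Lemma~\ref{lem:moreotimes} to match the root labels, and then identifying the point-move and set-move subtrees of $\redchar{q}{\ms A_1 \otimes \ms A_2}{\bar c}{\bar C}$ with the sets $S_1$ and $S_2$ of Definition~\ref{def:treecross} via the induction hypothesis. Your extra bookkeeping on how $\bar c[A_i]$, $\bar C \cap A_i$, the indicator vector, and the bounds $q_i \ge q-m-p$ transform is exactly the (implicit) content of the paper's argument, so no changes are needed.
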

\begin{proof}
The proof is an induction over $q - m - p$.
By Lemma~\ref{lem:moreotimes}, 
$$
\root(\redchar{q}{\ms A_1 \otimes \ms A_2}{\bar c}{\bar C}) = \root(R_1) \moreotimes{\bar d} \root(R_2).
$$
If $q - m - p = 0$, then $\redchar{q}{\ms A_1 \otimes \ms A_2}{\bar c}{\bar C}$
consists of a single root node and the lemma holds.
Otherwise, the set of subtrees is by definition
\begin{align*}
	\subtrees(\redchar{q}{\ms{A}}{\bar{c}}{\bar{C}}) = {} 
	& \big\{\, \redchar{q}{\ms{A}}{\bar{c} d}{\bar{C}} \mid d \in A \,\big\} \cup {} \\
	& \big\{\, \redchar{q}{\ms{A}}{\bar{c}}{\bar{C} D} \mid D \subseteq A \,\big\}.
\end{align*}
Here, by the induction hypothesis 
\begin{align*}
	& \big\{\, \redchar{q}{\ms{A}}{\bar{c} d}{\bar{C}} \mid d \in A \,\big\} \\
= {} 
	& \big\{\, \redchar{q}{\ms{A}}{\bar{c} d}{\bar{C}} \mid d \in A_1 \,\big\} \cup 
	  \big\{\, \redchar{q}{\ms{A}}{\bar{c} d}{\bar{C}} \mid d \in A_2 \,\big\} \\
\stackrel{\text{i.h.}}= {}
	& \big\{\, \redchar{q}{\ms A_1}{\bar{c}[A_1] d}{\bar{C} \cap A_1} \mycart{q}{\otimes}{\bar d(1,m_1+1)} R_2 \mid d \in A_1 \,\big\} \cup {} \\
	& \big\{\, R_1 \mycart{q}{\otimes}{\bar d(2,m_2+1)} \redchar{q}{\ms A_2}{\bar{c}[A_2] d}{\bar{C} \cap A_2} \mid d \in A_2 \,\big\}  \\
= {} & S_1
\end{align*}
and, similarly,
\begin{align*}
	& \big\{\, \redchar{q}{\ms{A}}{\bar{c}}{\bar{C} D} \mid D \subseteq A \,\big\} \\
\stackrel{\text{i.h.}}= {}
	& \big\{\,
		\redchar{q}{\ms A_1}{\bar c[A_1]}{\bar C D \cap A_1} 
		\mycart{q}{\otimes}{\bar d}
		\redchar{q}{\ms A_2}{\bar c[A_2]}{\bar C D \cap A_2} 
		\bigm| \\
	& \qquad \qquad D \in U \,\big\} \\
= {}	& S_2.
\end{align*}

This concludes the proof.
\qed
\end{proof}

\begin{lemma}\label{lemma:combination_time}
Given $R_1$ and $R_2$, the tree cross product $R_1 \mycart{q}{\otimes}{\bar d} R_2$ can be computed time
${\it poly}(|R_1|,|R_2|)$, where $|R_i|$ denotes the number of nodes in $R_i$.
\end{lemma}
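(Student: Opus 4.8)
The plan is to compute $R := R_1 \mycart{q}{\otimes}{\bar d} R_2$ by the recursion that directly mirrors Definition~\ref{def:treecross}: build the root label $\root(R_1) \moreotimes{\bar d} \root(R_2)$, and then recurse on each point-move child of $\root(R_1)$, on each point-move child of $\root(R_2)$, and on each pair consisting of a set-move child of $\root(R_1)$ and a set-move child of $\root(R_2)$, so as to assemble $S_1$ and $S_2$. Correctness is immediate from the definition. First I would note that computing a single root label is cheap: by Definition~\ref{def:moreotimes} it only amounts to forming the $t$-labeled composition $\ms H_1 \otimes \ms H_2$ of the two ordered induced structures stored at the roots of the arguments and renaming its elements via the map $f$ read off from the indicator vector; since $\ms H_1$, $\ms H_2$ have at most $q$ elements and $\otimes$ is described by $t \times t$ relabeling matrices, this takes time bounded by a function of $q$ and $t$ alone, hence $O(1)$ for fixed $q,t$.

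Next I would bound the number of recursive calls. A call arises by descending from the root of $R$ through at most $q$ moves, each of which is a point move descending in $R_1$, a point move descending in $R_2$, or a set move descending simultaneously in both; such a descent traces out the unique root-path to some node $v_1$ of $R_1$ and the unique root-path to some node $v_2$ of $R_2$, and the call is completely pinned down by $v_1$, $v_2$, and the sequence of move types made, which records how the two paths are interleaved and which steps are synchronized. Since there are fewer than $3^{q+1}$ such move-type sequences, the recursion makes at most $3^{q+1} \cdot |R_1| \cdot |R_2|$ calls, which is $\mathrm{poly}(|R_1|,|R_2|)$ for fixed $q$. (Alternatively, by Lemma~\ref{lem:tree_cross_product} one has $R = \redchar{q}{\ms A_1 \otimes \ms A_2}{\bar c}{\bar C}$, which by Lemma~\ref{lemma:red_char_size} already has size bounded by a constant depending only on $q$ and $t$.)

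Finally I would total the cost: there are $\mathrm{poly}(|R_1|,|R_2|)$ recursive calls, and at a call with arguments $\subtree_{R_1}(v_1)$ and $\subtree_{R_2}(v_2)$ the overhead is $O(1)$ for the root label plus $O(|\subtree_{R_1}(v_1)| \cdot |\subtree_{R_2}(v_2)|) \le O(|R_1| \cdot |R_2|)$ for enumerating the children of $v_1$ and of $v_2$, classifying them as point- or set-move children, and pairing up set-move children; summing over all calls yields a running time polynomial in $|R_1|$ and $|R_2|$. The step needing genuine care is the bound on the number of calls, specifically the claim that a call is determined by the triple $(v_1, v_2, \text{move-type sequence})$, which rests on the uniqueness of root-paths in trees and on the fact that a set move advances in $R_1$ and $R_2$ in lockstep; the remainder is routine bookkeeping.
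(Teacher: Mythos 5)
Your proposal is correct and matches the paper's argument: both compute the tree cross product by a direct top-down recursion mirroring Definition~\ref{def:treecross}, bound the number of recursive calls by (roughly) $|R_1|\cdot|R_2|$ via pairs of subtrees, and observe that the per-call work depends only on $q$ and $\tau$ (hence $t$). Your extra factor of $3^{q+1}$ for the move-type interleaving is just a more careful accounting of the indicator-vector bookkeeping that the paper's one-call-per-pair-of-subtrees statement glosses over, and it is harmless since $q$ is fixed.
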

\begin{proof}
An algorithm computing $R_1 \mycart{q}{\otimes}{\bar d} R_2$ may recursively traverse both trees top-down.
For each pair of subtrees $R_1'$ and $R_2'$ of $R_1$ and $R_2$, the algorithm has to be called only
once.  The number of recursive calls is therefore bounded by $|R_1|\cdot|R_2|$ and each recursive call
takes time dependent on~$q$ and $\tau$ only.
\qed
\end{proof}

We can now finally prove the Main Theorem.
\themainthm
\begin{proof}
It is no 
loss of generality to assume that~$G$ has at least one vertex. Otherwise
deciding whether~$G \models \varphi$ takes constant time.
By Lemmas~\ref{lemma:ef_model_checking}, \ref{lemma:games_full_char_trees} 
and~\ref{lemma:games_red_char_trees}, to prove that~$G \models \varphi$
it is sufficient to show that the verifier has a winning strategy in the 
model checking game $\mcg(\rc{q}{G},\varphi,\epsilon,\epsilon)$.
By Lemma~\ref{lemma:red_char_size}, the size of the reduced characteristic 
tree~$\rc{q}{G}$ of a $t$-labeled graph is at most~$f_1(q,t)$ for some
computable function~$f_1$ of~$q$ and~$t$ alone. By Lemma~\ref{lemma:combination_time},
the time taken to combine two reduced characteristic trees of size~$f_1(q,t)$ 
is~$f(q,t) = \mbox{poly}(f_1(q,t))$. 

We claim that the total time taken to construct~$\rc{q}{G}$ from its parse
tree decomposition~$T$ is $O(f(q,t) \cdot |T|)$. The proof is by induction 
on~$|T|$. By Lemma~\ref{lemma:construction_constant_size}, the claim
holds when~$|T| = 1$. Suppose that $\bar{G} = \bar{G}_1 \myotimes{g}{h_1}{h_2} \bar{G}_2$,
where~$g,h_1,h_2$ are $t$-relabelings and let~$T_1$ and~$T_2$ be parse trees of~$\bar{G}_1$
and~$\bar{G}_2$, respectively. Then $|T| = |T_1| + |T_2| + 1$, where~$T$ is a parse tree of~$\bar{G}$. 
By induction hypothesis, one can construct the reduced characteristic trees~$\rc{q}{G_1}$ 
and~$\rc{q}{G_2}$ in times~$O(f(q,t) \cdot |T_1|)$ and~$O(f(q,t) \cdot |T_2|)$, respectively. 
By Lemma~\ref{lem:tree_cross_product}, one can indeed construct $\rc{q}{G}$
given~$\rc{q}{G_1}$, $\rc{q}{G_2}$ and~$\bar{d} = \varepsilon$.
By using Lemma~\ref{lemma:combination_time}, the time taken to construct~$\rc{q}{G}$ is
\[O(f(q,t) + f(q,t) \cdot |T_1| + f(q,t) \cdot |T_2|) = O(f(q,t) \cdot |T|),\]
thereby proving the claim.

In order to check whether the verifier has a winning strategy in the model
checking game $\mcg(\rc{q}{G}, \varphi, \epsilon, \epsilon)$, one can use a
very simple recursive algorithm (see also~\cite{Gra07}).   A position
$p = (\psi, \bar x, \bar X, u)$ of the model checking game can
be identified with a call of the algorithm with arguments~$p$.
If $\psi$ is universal, then the algorithm recursively
checks whether the verifier has a winning strategy from all positions
$u'$ that are reachable from $u$ in the model checking game.
If otherwise $\psi$ is existential, then the algorithm checks whether there is one subsequent position
in the game from which the verifier has a winning strategy.
This algorithm visits each node
of the reduced characteristic tree~$\rc{q}{G}$ at most once.  Therefore the 
time taken to decide whether~$G \models \varphi$ is 
$O(f_1(q,t) + f(q,t) \cdot |T|) = O(f(q,t) \cdot |T|)$, as claimed. 
\qed
\end{proof}

\section{Discussion and Conclusion}
\label{sec:conclusion}
The proof of the Main Theorem shows that deciding whether a graph models
an \msoone-sentence is linear-time doable if the rankwidth of the graph is 
bounded. The theorem by Courcelle et al.~\cite{CMR00} says something stronger: 
one can compute the \emph{optimal} solution to a linear optimization problem expressible 
in \msoone\ in linear time for graphs of bounded rankwidth. In its simplest form, 
a \emph{linear optimization
problem} in \msoone\  is a tuple 
$$(\varphi(X_1, \ldots, X_l), a_1, \ldots, a_l, \opt),$$
where~$\varphi(X_1, \ldots, X_l)$ is an \msoone-formula with
the free set variables~$X_1, \ldots, X_l$, $\bar{a} = a_1, \ldots, a_l \in \mathbf{Z}^l$,
and~$\opt$ is either~$\max$ or~$\min$. 
The objective is, given an input graph~$G=(V,E)$, to find~$(U_1, \ldots, U_l) \subseteq V^l$ 
such that~$G \models \varphi[X_1/U_1, \ldots, X_l/U_l]$ and~$\sum_{i=1}^l a_i |U_i|$
is optimized (maximized or minimized).

One can use the techniques outlined in this paper to prove the stronger statement by first
constructing reduced characteristic trees $\redchar{q}{G}{\varepsilon}{U_1, \ldots, U_l}$,
of which there are only a function of~$q$ and~$l$. All that remains to do is simulate
the model checking game on each of the reduced characteristic trees and output the tuple 
$(U_1, \ldots, U_l)$ for which there is a winning strategy and $\sum_{i=1}^l a_i |U_i|$
is optimized. 

Moreover the results of this paper naturally extend to directed graphs and birankwidth. 
This allows us to conclude that any decision or optimization problem on directed 
graphs expressible in \msoone\ is linear-time solvable on graphs of bounded 
birankwidth~\cite{CMR00,Kan07}. Finally, the game-theoretic approach has already
been used to prove Courcelle's result for treewidth~\cite{Cou90,ALS91,CM93}
with an emphasis on practical implementability~\cite{KLR10}.


\end{document}